\newtheorem{example}{Example}
\newtheorem{theorem}{Theorem}
\newtheorem{lemma}{Lemma}
\newtheorem{definition}{Definition}
\newtheorem{proof}{Proof}
\newtheorem{corollary}{Corollary}
\begin{document}
\title{$\prod\limits_{i=1}^{n} \mathbb{Z}_{2^i}$-Additive Cyclic Codes}
\author{Tapabrata Roy and Santanu Sarkar
\IEEEcompsocitemizethanks{\IEEEcompsocthanksitem 
 Department of Mathematics, Indian  Institute of Technology Madras,
Chennai 600 036, India.
E-mail: tapabrata.roy.048@gmail.com, sarkar.santanu.bir@gmail.com}
\thanks{}}


\maketitle

\begin{abstract}
In this paper we study $\prod\limits_{i=1}^{n} \mathbb{Z}_{2^i}$-Additive Cyclic Codes. These codes are identified as $\mathbb{Z}_{2^n}[x]$-submodules of $\prod\limits
_{i=1}^{n}\mathbb{Z}_{2^i}[x]/ \langle x^{\alpha_i}-1\rangle$; $\alpha_i$ and $\rm{i}$ being relatively prime for each $i=1,2,\ldots,n.$ We first define a $\prod\limits_{i=1}^{n}\mathbb{Z}_{2^i}$-additive cyclic code of a certain length. We then define the distance between two codewords and the minimum distance of such a code. Moreover we relate these to binary codes using the generalized Gray maps. We define the duals of such codes and show that the dual of a $\prod\limits_{i=1}^{n}\mathbb{Z}_{2^i}$-additive cyclic code is also cyclic. We then give the polynomial definition of a $\prod\limits_{i=1}^{n}\mathbb{Z}_{2^i}$-additive cyclic code of a certain length. We then determine the structure of such codes and derive a minimal spanning set for that. We also determine the total number of codewords in this code. We finally give an illustrative example of a $\prod\limits_{i=1}^{n}\mathbb{Z}_{2^i}$-additive cyclic code.
\end{abstract}

\begin{IEEEkeywords}
Additive code, Cyclic code,  Dual Code,  Ideal,  Generator,  Polynomial,  Spanning set
\end{IEEEkeywords}


\section{Introduction}
Let $\mathbb{Z}_{2^i}=\{0,1,2,\ldots,2^{i}-1\}$ denote the ring of integers modulo $2^i$; $i=1,2,3,\ldots,n$. We follow the usual notations throughout this paper. Let $\mathbb{F}_q$ be a field with $q$ elements. A linear code over $\mathbb{F}_q$ of length $r \in \mathbb{N}$ is a subspace of the vector space $\mathbb{F}_q^{r}$ over $\mathbb{F}_q$. Later codes were defined over rings \cite{Calderbank95,Greferath99,Honold98} because of their similarities with finite fields. It is expected that the theory of linear codes over finite chain rings may resemble the same over finite fields. Here we restrict our study to $\mathbb{Z}_{2^i};~ {i}= 1,2,\ldots,n;~ n \in \mathbb{N}$. A subset $\mathcal{A} \subseteq \mathbb{Z}_2^{\alpha_1};~ \alpha_1 \in \mathbb{N}$ is called a $\mathbb{Z}_2$-linear code or a binary linear code if it is a $\mathbb{Z}_2$-subspace of  $\mathbb{Z}_2^{\alpha_1}$. For $a=(a_1,a_2,\ldots,a_{\alpha_1}) \in \mathbb{Z}_2^{\alpha_1}$ the Hamming weight of $a$ is defined to be the number of non-zero coordinates in $a$ i.e. $$wt_H(a)=\mid \{i \in \{1,2,\ldots,\alpha_1\}: a_i \neq 0\} \mid .$$ For $a=(a_1,a_2,\ldots,a_{\alpha_1})$ and $b=(b_1,b_2,\ldots,b_{\alpha_1})$ the Hamming distance between $a$ and $b$ is defined to be the number of coordinates where $a$ and $b$ differ, i.e. $$d_H(a,b)=wt_H(a-b).$$ A subset $\mathcal{C} \subseteq \mathbb{Z}_{2^i}^{\alpha_i}$ is called a $\mathbb{Z}_{2^i}$-code if it is a $\mathbb{Z}_{2^i}$-submodule of $\mathbb{Z}_{2^i}^{\alpha_i}; $ $\alpha_i \in \mathbb{N},~ i \in \{1,2,3,\ldots,n\}$. The Lee weight of $a=(a_1,a_2,\ldots,a_{\alpha_i}) \in \mathbb{Z}_{2^i}^{\alpha_i}$ is denoted by $wt_L(a)$ and is defined by $$wt_L(a)=\sum\limits_{j_i=1}^{\alpha_i} min\{a_{j_i},2^i-a_{j_i}\}.$$ The Lee distance between $a=(a_1,a_2,\ldots,a_{\alpha_i})$ and $b=(b_1,b_2,\ldots,b_{\alpha_i}) \in  \mathbb{Z}_{2^i}^{\alpha_i}$ is defined by $$d_L(a,b)=\sum\limits_{j_i=1}^{\alpha_i} min\{\mid a_{j_i}-b_{j_i}\mid, 2^i- \mid a_{j_i}-b_{j_i}\mid\}.$$ From the very beginning binary cyclic codes are treated as one of the most important classes of codes \cite{Pless89}. Sloane et al. showed in 1994 that binary codes can be found as images of linear codes over $\mathbb{Z}_4$ under a non linear Gray map \cite{Hammons94}. This grew interest in many researchers to study codes over various rings. Since 1990's many papers \cite{Abualrub04,Abualrub07,Abualrub03,Hammons94,Pless96} studied codes over $\mathbb{Z}_4$. In \cite{Pless96} authors characterized the ideals in $\mathbb{Z}_4[x]/\langle f(x) \rangle; f(x)$ being a basic irreducible polynomial over $\mathbb{Z}_4$, and found $\mathbb{Z}_4$-cyclic codes. In \cite{Roy17} authors generalized these ideas to obtain the ideals in $\mathbb{Z}_{2^n}[x]/\langle f(x) \rangle; f(x)$ being a basic irreducible polynomial in $\mathbb{Z}_{2^n}[x]$. They also found $\mathbb{Z}_{2^n}[x]$-cyclic codes. Recently $\mathbb{Z}_2 \mathbb{Z}_4$-additive cyclic codes have emerged \cite{Abualrub14,Bilal11,Borges09,Borges16,Rifa11}. A $\mathbb{Z}_2\mathbb{Z}_4$-additive code can be identified as a $\mathbb{Z}_4[x]$-submodule of $R_{r,s}=\mathbb{Z}_2[x]/\langle x^r-1 \rangle \times \mathbb{Z}_4[x]/\langle x^s-1 \rangle$. They determined the structure of such codes and also obtained a minimal spanning set for that. In \cite{Aydogdu13,Aydogdu14,Borges15,Roy17} authors generalized these ideas. These codes have shown promising applications to many areas. Such as, the concepts in \cite{Borges09} have been implemented by authors in \cite{Borges07}, perfect $\mathbb{Z}_2 \mathbb{Z}_4$-additive cyclic codes have been used in steganography \cite{Rifa11}. Here we study $\prod \limits_{i=1}^{n} \mathbb{Z}_{2^i}$-additive cyclic codes. This we do as an extension of \cite{Abualrub14} and \cite{Aydogdu17}. 

We begin with the definition of a $\prod \limits_{i=1}^{n} \mathbb{Z}_{2^i}$-additive code and a $\prod \limits_{i=1}^{n} \mathbb{Z}_{2^i}$-additive cyclic code. Then we give a notion of distance in it. We also relate these to binary codes by defining generalized Gray maps. We then identify these codes as $\mathbb{Z}_{2^n}[x]$-submodules of $\prod \limits_{i=1}^{n} \mathbb{Z}_{2^i}[x]/\langle x^{\alpha_i}-1 \rangle; ~gcd(i, \alpha_i)=1 ~\forall ~i=1,2,\ldots,n$, and give the polynomial definition of such codes. We then determine the generator polynomials, a minimal spanning set and the total number of codewords for a $\prod \limits_{i=1}^{n} \mathbb{Z}_{2^i}[x]$-additive cyclic code. Finally we present an illustrative example.

\section{$\prod\limits_{i=1}^{n} \mathbb{Z}_{2^i}$-Cyclic Codes}

Consider the set 
$$\prod_{i=1}^n \mathbb{Z}_{2^i}=\mathbb{Z}_2 \times \mathbb{Z}_4 \times \cdots \times \mathbb{Z}_{2^n}=\{(u_1, u_2, \ldots, u_n) \ | \ u_i \in \mathbb{Z}_{2^i} ~ \forall ~i=1,2,\ldots,n\}. $$
This set is closed under addition. We make it closed under multiplication by elements of $\mathbb{Z}_{2^n}$ by defining the multiplication 
$$v \cdot (u_1,u_2,\ldots, u_n)=(vu_1 \bmod 2, vu_2 \bmod 4, \ldots, vu_n \bmod 2^n ),$$
$\forall~ v \in \mathbb{Z}_{2^n}, (u_1,u_2,\ldots, u_n) \in \prod\limits_{i=1}^n \mathbb{Z}_{2^i}.$ Defining this multiplication 
$ \prod\limits_{i=1}^n \mathbb{Z}_{2^i}$ becomes a $\mathbb{Z}_{2^n}$-module. 

\begin{definition}
A subgroup, $\mathcal{C}$, of $\prod\limits_{i=1}^n \mathbb{Z}^{\alpha_i}_{2^i}; \alpha_i \in \mathbb{Z}^{\#}$ will be called a  
$\prod\limits_{i=1}^n \mathbb{Z}_{2^i}$-additive code.

\end{definition}
for simplicity we consider that for such a code $\mathcal{C}$, first $\alpha_1$ coordinates consists of elements from $\mathbb{Z}_2$,
next $\alpha_2$ coordinates consists of elements from $\mathbb{Z}_4$ and so on until the last $\alpha_n$
coordinates consists of elements from $\mathbb{Z}_{2^n}.$ We denote this by 
$$(v_1,v_2, \ldots, v_n) \in \mathcal{C}, $$
where $v_i=(v_{i1}, v_{i2}, \ldots, v_{i \alpha_i}) \in \mathbb{Z}^{\alpha_i}_{2^i}$ is a $\alpha_i$-tuple.
For some well known results, we will assume that $\gcd(i, \alpha_i)=1$ $\forall ~ i \in \{1,2,\ldots,n\}$ throughout the paper. Clearly any subgroup of 
$\prod\limits_{i=1}^n \mathbb{Z}^{\alpha_i}_{2^i}$ is of the form $\prod\limits_{i=1}^n \mathbb{Z}^{\beta_i}_{2^i}; 0 \leq \beta_i \leq \alpha_i$.
So taking this into account, we define the following.

\begin{definition}
A subgroup $\mathcal{C} \cong \prod\limits_{i=1}^n \mathbb{Z}^{\beta_i}_{2^i}$ of $\prod\limits_{i=1}^n \mathbb{Z}^{\alpha_i}_{2^i}$ is called 
a $\prod\limits_{i=1}^n \mathbb{Z}_{2^i}$-additive code of type $(\alpha_1, \alpha_2, $ $\ldots,$ $ \alpha_n, \beta_1, \beta_2, \ldots, \beta_n).$
\end{definition}

We  now measure distance between two codewords in a code of above type. We define the Gray map for $q \in \mathbb{N}$.
Define a map $\phi: \mathbb{Z}_{2q}\longrightarrow \mathbb{Z}_2^q$ by $\phi(0)=(0,0,0,\ldots,0,0,0)$, $\phi(kq+i)+\phi(kq+i+1)=(0,0,\ldots,1,0,\ldots,0)$, where $(q-i)$th entry is $1$ and all other entries are $0$ for all 
$i=0,1,\ldots,q-1$ and $k=0,1$; for example, take $q=4$ and define $\phi: \mathbb{Z}_{8}\longrightarrow \mathbb{Z}_2^4$ by 
$\phi(0)=(0,0,0,0)$, $\phi(1)=(0,0,0,1)$, $\phi(2)=(0,0,1,1)$, $\phi(3)=(0,1,1,1)$, $\phi(4)=(1,1,1,1)$, $\phi(5)=(1,1,1,0)$, $\phi(6)=(1,1,0,0)$, $\phi(7)=(1,0,0,0)$. Let $\phi_i: \mathbb{Z}_{2^i} \rightarrow \mathbb{Z}^{2^{i-1}}_2, 
$ be the Gray maps defined above. So, we can define a Gray map
$$\Phi: \prod_{i=1}^n \mathbb{Z}^{\alpha_i}_{2^i} \rightarrow \mathbb{Z}^{s}_2 ;~ s= \sum\limits_{j=1}^n 2^{j-1}\alpha_j, $$ by 
$\Phi(v_{11},v_{12}, \ldots, v_{1\alpha_1},v_{21},v_{22}, \ldots, v_{2\alpha_2}, \ldots, v_{n1},v_{n2},\ldots, v_{n\alpha_n}) \\ =(v_{11},v_{12}, \ldots, v_{1\alpha_1},$ $\phi_2(v_{21}),\phi_2(v_{22}), \ldots,\phi_2(v_{2\alpha_2}),\ldots,  \phi_n(v_{n1}),\phi_n(v_{n2}),\ldots, 
\phi_n(v_{n\alpha_n})).$ 

Clearly this map $\Phi$ transforms the Lee distance in $\prod\limits_{i=1}^n \mathbb{Z}^{\alpha_i}_{2^i}$ into the Hamming distance in
$\mathbb{Z}^{s}_2;~s=\sum\limits_{j=1}^n 2^{j-1}\alpha_j.$
For any $(v_1,v_2, \ldots, v_n) \in \prod\limits_{i=1}^n\mathbb{Z}^{\alpha_i}_{2^i}$ with $v_i=(v_{i1},v_{i2},\ldots,v_{i\alpha_i}) \in \mathbb{Z}^{\alpha_i}_{2^i}$ for $1 \leq i \leq n$, the weight of $(v_1,v_2,\ldots,v_n)$ is defined to be
$$wt(v_1,v_2,\ldots, v_n)=wt_H(v_1)+\displaystyle\sum_{i=2}^n wt_{L_i}(v_i); $$ $wt_H(v_1)$ being the Hamming weight of $v_1$ in $\mathbb{Z}^{\alpha_1}_2$
and $wt_{L_i}(v_i)$ being the Lee weight of $v_i$ in $\mathbb{Z}^{\alpha_i}_{2^i}; ~ i=2,3,\ldots,n.$ For two element 
$c_1,c_2 \in \prod\limits_{i=1}^n\mathbb{Z}^{\alpha_i}_{2^i} $ the distance between $c_1$ and $c_2$ is denoted by $d(c_1,c_2)$ and is defined 
to be $$d(c_1,c_2)=wt(c_1-c_2).$$ For a code $\mathcal{C}$, we denote the minimum distance by $d(\mathcal{C})$ and define it as $$d(\mathcal{C})=\min_{c_1,c_2 \in \mathcal{C}}\{d(c_1,c_2); c_1 \neq c_2\}.$$ 
The binary image $\Phi(\mathcal{C})$ of $\mathcal{C}$ of the above type is a binary code of length $s=\displaystyle\sum_{j=1}^n 2^{j-1}\alpha_j$ and 
size $2^{t};~t=\sum\limits_{j=1}^n j\beta_j$. It will be called a $\prod\limits_{i=1}^n \mathbb{Z}_{2^i}$-linear code. Now extending the classical definition
of a cyclic code naturally, we define a $\prod\limits_{i=1}^n \mathbb{Z}_{2^i}$-additive cyclic code.

\begin{definition}
A subset $\mathcal{C}$ of $\prod\limits_{i=1}^n \mathbb{Z}^{\alpha_i}_{2^i}$ is called a $\prod\limits_{i=1}^n \mathbb{Z}_{2^i}$-additive cyclic code if
\begin{enumerate}
\item $\mathcal{C}$ is an additive code and
\item For any code word $v=(v_{11},v_{12},\ldots, v_{1\alpha_1},v_{21},v_{22},\ldots, v_{2\alpha_2},v_{n1},v_{n2},\ldots,v_{n\alpha_n}) \in \mathcal{C}$
its cyclic shift,

$T(u)= (v_{1\alpha_1},v_{11},v_{12},\ldots, v_{1(\alpha_1-1)},v_{2\alpha_2},v_{21},v_{22},\ldots, v_{2(\alpha_2-1)},  \\
 ~~~~~~~~~~~~~~~~~~~~~~~~~~~~~~~~~~~~~~~~~~~ \ldots,v_{n\alpha_n},
v_{n1},v_{n2},\ldots,v_{n(\alpha_n-1)}),$
is also in $\mathcal{C}$.
\end{enumerate}
\end{definition}
To define the dual of such a code we define the following inner product in $\prod\limits_{i=1}^n\mathbb{Z}^{\alpha_i}_{2^i}.$

Let $u=(u_1,u_2,\ldots,u_n)$ and $v=(v_1,v_2,\ldots,v_n)$ be two elements in $\prod\limits_{i=1}^n \mathbb{Z}^{\alpha_i}_{2^i}$
where $u_i=(u_{i1},u_{i2},\ldots, u_{i\alpha_i}),$ $v_i=(v_{i1},v_{i2},\ldots, v_{i\alpha_i}) \in \mathbb{Z}^{\alpha_i}_{2^i} ~ \forall ~ i \in \{1,2,\ldots,n\}.$
Define the inner product 
$$u\cdot v=\bigg[\displaystyle\sum_{i=1}^n \big(2^{n-i}\displaystyle\sum_{j_i=1}^{\alpha_i}u_{ij_i}v_{ij_i}\big)\bigg] \bmod 2^n. $$

\begin{definition}
Let $\mathcal{C}$ be any $\prod\limits_{i=1}^n \mathbb{Z}^{\alpha_i}_{2^i}$ additive cyclic code. The dual of $\mathcal{C}$ is denoted by $\mathcal{C}^{\bot}$
and is defined to be $\mathcal{C}^{\bot}=\{v \in \prod\limits_{i=1}^n \mathbb{Z}^{\alpha_i}_{2^i} \ | \ u \cdot v=0 \ \forall \ u \in \mathcal{C}\}.$
Clearly the dual of a code $\mathcal{C}$ is also an additive code.

\end{definition} 

We have the following lemma.

\begin{lemma}
The dual of a $\mathbb{Z}^{\alpha_i}_{2^i}$-additive cyclic code. Then the dual $\mathcal{C}^{\bot}$ of $\mathcal{C}$ is also cyclic.
\end{lemma}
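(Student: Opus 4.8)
The plan is to show that if $\mathcal{C}$ is a $\prod_{i=1}^n \mathbb{Z}_{2^i}$-additive cyclic code, then its dual $\mathcal{C}^{\bot}$ is closed under the simultaneous cyclic shift operator $T$. The key observation is that the shift $T$ acts independently on each of the $n$ blocks (the $\mathbb{Z}_{2^i}$-part being shifted cyclically within its own $\alpha_i$ coordinates), and the inner product decouples as a weighted sum $u\cdot v=\big[\sum_{i=1}^n 2^{n-i}\langle u_i,v_i\rangle\big]\bmod 2^n$, where $\langle u_i,v_i\rangle=\sum_{j=1}^{\alpha_i}u_{ij}v_{ij}$. So the heart of the matter is the classical fact that a single cyclic shift preserves the componentwise inner product: for any $\alpha$-tuples $a,b$ one has $\langle T_\alpha(a),T_\alpha(b)\rangle=\langle a,b\rangle$, where $T_\alpha$ is the single-block shift. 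This holds because the shift is just a relabeling of indices (a permutation of coordinates by the cyclic group), and the standard inner product is invariant under any common permutation of the coordinates of both arguments.

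First I would fix $v\in\mathcal{C}^{\bot}$ and aim to prove $T(v)\in\mathcal{C}^{\bot}$, i.e.\ that $u\cdot T(v)=0$ for every $u\in\mathcal{C}$. Since $T$ shifts all blocks simultaneously and the inner product is a weighted sum over blocks, I would write
$$
u\cdot T(v)=\Big[\sum_{i=1}^n 2^{n-i}\langle u_i,T_{\alpha_i}(v_i)\rangle\Big]\bmod 2^n.
$$
The next step is the standard shift trick. Applying the single-block inverse shift $T_{\alpha_i}^{-1}$ to the first argument does not change the componentwise product, so $\langle u_i,T_{\alpha_i}(v_i)\rangle=\langle T_{\alpha_i}^{-1}(u_i),v_i\rangle$ for each $i$. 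Performing this on every block simultaneously gives $u\cdot T(v)=T^{-1}(u)\cdot v$, where $T^{-1}$ is the simultaneous inverse shift acting block-by-block. The clean way to package this is to note that $T^{-1}=T^{\,m-1}$ for a suitable common period, but it is cleaner still to argue directly that $T^{-1}(u)\in\mathcal{C}$.

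The main obstacle is ensuring that $T^{-1}(u)$ lies back in $\mathcal{C}$; this is where I would use cyclicity of $\mathcal{C}$ together with finiteness. Since each block shift $T_{\alpha_i}$ has finite order dividing $\alpha_i$, the simultaneous operator $T$ has finite order equal to $m=\mathrm{lcm}(\alpha_1,\ldots,\alpha_n)$, so $T^{-1}=T^{\,m-1}$. Because $\mathcal{C}$ is cyclic, $T(\mathcal{C})\subseteq\mathcal{C}$, and iterating gives $T^{\,m-1}(u)\in\mathcal{C}$ for all $u\in\mathcal{C}$. Hence $T^{-1}(u)=T^{\,m-1}(u)\in\mathcal{C}$, and since $v\in\mathcal{C}^{\bot}$ we conclude $u\cdot T(v)=T^{-1}(u)\cdot v=0$. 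As this holds for every $u\in\mathcal{C}$, we obtain $T(v)\in\mathcal{C}^{\bot}$, proving $\mathcal{C}^{\bot}$ is cyclic. The only routine verifications left are the elementary identity $\langle a,T_\alpha(b)\rangle=\langle T_\alpha^{-1}(a),b\rangle$ and the block-decomposition of the inner product, both of which follow immediately from the definitions.
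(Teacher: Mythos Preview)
Your argument is correct and is essentially the same as the paper's: both use that $T$ has order $m=\mathrm{lcm}(\alpha_1,\ldots,\alpha_n)$ so that $T^{-1}=T^{m-1}$ maps $\mathcal{C}$ into itself, together with the shift identity $u\cdot T(v)=T^{-1}(u)\cdot v$ (the paper writes this as $T^{k-1}(v)\cdot u = v\cdot T(u)$ via an explicit coordinate computation, with the roles of $u$ and $v$ swapped).
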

\begin{proof}

Let $u=(u_{11},\ldots,u_{1\alpha_1},u_{21},\ldots,u_{2\alpha_2}, \ldots,u_{n1},\ldots,u_{n\alpha_n}) \in \mathcal{C}^{\bot}.$ It is enough to show that $T(u) \in \mathcal{C}^{\bot}.$ We have $u \in \mathcal{C}^{\bot}$. So, if we take any element $v$ from $\mathcal{C}$ with 
$$v=(v_{11},\ldots,v_{1\alpha_1},v_{21},\ldots,v_{2\alpha_2},\ldots,v_{n1},\ldots,v_{n\alpha_n}),$$ we must have 
$$u\cdot v=\bigg[\displaystyle\sum_{i=1}^n \big(2^{n-i}\displaystyle\sum_{j_i=1}^{\alpha_i}u_{ij_i}v_{ij_i}\big)\bigg]=0 \bmod 2^n. $$
Let $k=\mbox{lcm}(\alpha_1,\alpha_2,\ldots, \alpha_n).$ Then $T^k(v)=v \ \forall \ v \in \prod\limits_{i=1}^n\mathbb{Z}^{\alpha_i}_{2^i}.$ 

Let $w=T^{k-1}(v)=(v_{12} \ldots v_{1\alpha_1}v_{11},v_{22}$ $\ldots v_{2\alpha_2}v_{21},\ldots, v_{n2}\ldots v_{n\alpha_n}v_{n1}).$
Now $w \in \mathcal{C}$. So, 
$$\begin{array}{ll}
0 &=w \cdot u \\ &= \bigg[2^{n-1}(v_{11}u_{1\alpha_1}+v_{12}u_{11}+\cdots+v_{1\alpha_1}u_{1(\alpha_1-1)})\\
&~~~~~~~~~~~~~~~~~~~~~~~  +2^{n-2}(v_{21}u_{2\alpha_2}+v_{22}u_{21}+ \cdots+v_{2\alpha_2} u_{2(\alpha_2 -1)})+ \\
&~~~~~~~~~~~~~~~~~~~~~~~~~~~~~~~~ \cdots+(v_{n1}u_{n\alpha_n}+v_{n2}u_{n1}+\cdots+v_{n\alpha_n}u_{n(\alpha_n-1)
})\bigg] \bmod 2^n \\ &=v \cdot T(u).
\end{array}$$

Thus $v \cdot T(u)=0 \\ \Rightarrow T(u) \in \mathcal{C}^{\bot}$ and hence $\mathcal{C}^{\bot}$ is cyclic.
\end{proof}
Denote $\mathcal{R}_{\alpha_1,\alpha_2, \ldots, \alpha_n}=\prod\limits_{i=1}^n \mathbb{Z}_{2^i}[x]/\langle x^{\alpha_i}-1\rangle.$

Furthermore any element $u=(u_{10},u_{11},\ldots, u_{1(\alpha_1-1)},u_{20},u_{21},\ldots,$ $u_{2(\alpha_2-1)},\ldots,u_{n0},$  $u_{n1},$ $\ldots,$ $ u_{n(\alpha_n-1)})=(u_1,u_2,\ldots,u_n); u_i=(u_{i0},u_{i1}, \ldots, u_{i(\alpha_i-1)}) \in \mathbb{Z}^{\alpha_i}_{2^i}$ for $1 \leq i \leq n$
can be identified as

 $u(x)=\bigg(\displaystyle\sum_{j_1=0}^{\alpha_1-1} u_{1j_1}x^{j_1 }, \displaystyle\sum_{j_2=0}^{\alpha_2-1} u_{2j_2}x^{j_2},\ldots, 
\displaystyle\sum_{j_n=0}^{\alpha_n-1} u_{nj_n}x^{j_n}\bigg)=(u_1(x),u_2(x), \ldots,u_n(x));$

$u_i(x)=\displaystyle\sum_{j_i=0}^{\alpha_i-1} u_{ij_i}x^{j_i}$ for $1 \leq i \leq n$ in 
$\mathcal{R}_{\alpha_1,\alpha_2,\ldots,\alpha_n}.$ 

Clearly this is a one to one correspondence between the elements in $\prod\limits_{i=1}^n\mathbb{Z}^{\alpha_i}_{2^i}$ and $\mathcal{R}_{\alpha_1,\alpha_2,\ldots, \alpha_n}$. Define a scalar multiplication for $d(x)\in \mathbb{Z}_{2^n}[x]$ and 
$(u_1(x), \ldots, u_n(x))$ $ \in \mathcal{R}_{\alpha_1,\alpha_2,\ldots, \alpha_n}$ by 
$d(x)*(u_1(x), u_2(x), \ldots, u_n(x))=(d(x)u_1(x) \bmod 2, d(x)u_2(x) \bmod 4,$ $ \ldots, d(x)u_n(x) \bmod 2^n).$
Clearly this multiplication is well defined and $\mathcal{R}_{\alpha_1,\ldots, \alpha_n}$ is a $\mathbb{Z}_{2^n}[x]$-module
with respect to this multiplication. Now we give the polynomial definition of a $\prod\limits_{i=1}^n \mathbb{Z}^{\alpha_i}_{2^i}$-additive cyclic code which is obviously equivalent to what we defined earlier.
\begin{definition}
A subset $\mathcal{C} \subseteq \mathcal{R}_{\alpha_1, \alpha_2,\ldots,\alpha_n}$ is defined to be a $\prod\limits_{i=1}^n \mathbb{Z}^{\alpha_i}_{2^i}$-additive cyclic code if $\mathcal{C}$ is a subgroup of $\mathcal{R}_{\alpha_1,\ldots,\alpha_n}$ and for all $u(x)=(u_1(x), u_2(x), \ldots, u_n(x)) \in \mathcal{C},$
$x * u(x)$ is in $\mathcal{C}$. 
\end{definition}
From the above the following theorem follows easily. 

\begin{theorem}
A code $\mathcal{C}$ is a $\prod\limits_{i=1}^n \mathbb{Z}^{\alpha_i}_{2^i}$-additive cyclic code iff $\mathcal{C}$ is a $\mathbb{Z}_{2^n}[x]$-submodule 
of $$\mathcal{R}_{\alpha_1,\alpha_2,\ldots, \alpha_n}= \prod\limits_{i=1}^n \mathbb{Z}_{2^i}[x]/\langle x^{\alpha_i}-1\rangle.$$
\end{theorem}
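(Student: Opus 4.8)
The plan is to prove the two implications of the equivalence separately, noting that the two definitions in play---(i) being an additive subgroup closed under the single shift $x*$, and (ii) being closed under multiplication by every $d(x)\in\mathbb{Z}_{2^n}[x]$---differ only in how much scalar closure is demanded.

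I would dispose of the easy implication first. If $\mathcal{C}$ is a $\mathbb{Z}_{2^n}[x]$-submodule of $\mathcal{R}_{\alpha_1,\ldots,\alpha_n}$, then it is in particular an additive subgroup, and closure under multiplication by every scalar forces closure under the particular scalar $x\in\mathbb{Z}_{2^n}[x]$; hence $x*u(x)\in\mathcal{C}$ for all $u(x)\in\mathcal{C}$, which is precisely the polynomial definition of an additive cyclic code.

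For the substantive converse, assume $\mathcal{C}$ is an additive cyclic code, so it is a subgroup closed under $x*$, and I must upgrade this to closure under a general scalar $d(x)=\sum_{k=0}^m c_k x^k$ with $c_k\in\mathbb{Z}_{2^n}$. I would proceed in two steps. First, a short induction on $j$ shows that closure under $x*$ propagates to closure under $x^j*$ for every $j\geq 0$, using $x^j*u(x)=x*\bigl(x^{j-1}*u(x)\bigr)$ from the module action. Second, since $\mathcal{C}$ is an additive subgroup it is closed under repeated addition, and adding a fixed element $w$ to itself $c$ times coincides coordinatewise with the constant scalar action $c*w$; hence $\mathcal{C}$ is closed under multiplication by any constant $c\in\mathbb{Z}_{2^n}$. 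Combining these with distributivity of the action, I would write $d(x)*u(x)=\sum_{k=0}^m c_k*\bigl(x^k*u(x)\bigr)$, in which each $x^k*u(x)\in\mathcal{C}$, each summand $c_k*(x^k*u(x))\in\mathcal{C}$, and the finite sum lies in $\mathcal{C}$ by additive closure. Thus $d(x)*u(x)\in\mathcal{C}$ for arbitrary $d(x)$, and $\mathcal{C}$ is a submodule.

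The only point needing care, though it is routine, is the identification of group-theoretic repeated addition with the module's constant scalar multiplication: in each coordinate one must check that the $c$-fold sum $w_i+\cdots+w_i$ taken in $\mathbb{Z}_{2^i}[x]/\langle x^{\alpha_i}-1\rangle$ equals $c\,w_i \bmod 2^i$. This holds because the group law on $\mathcal{R}_{\alpha_1,\ldots,\alpha_n}$ is defined coordinatewise with reduction modulo the respective powers of two. Once this compatibility is granted, the decomposition of $d(x)*u(x)$ into scalar-times-shift pieces closes the argument with no further obstacle.
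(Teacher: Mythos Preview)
Your proposal is correct. The paper itself gives no proof of this theorem at all: it simply writes ``From the above the following theorem follows easily'' after introducing the polynomial identification and the scalar multiplication $d(x)*(u_1(x),\ldots,u_n(x))$. Your argument is exactly the routine unpacking of that ``follows easily'': the submodule-to-cyclic direction is immediate, and for the converse you correctly promote closure under the single shift $x*$ to closure under all $x^j*$ by induction, handle constant scalars via repeated addition in the abelian group, and combine the two through distributivity of the module action. The compatibility check you flag (that the $c$-fold sum in coordinate $i$ agrees with $c\,w_i\bmod 2^i$) is indeed the only point requiring a moment's thought, and you have it right. There is nothing to compare against in the paper beyond its one-line assertion; your write-up simply supplies the omitted details.
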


\section{Structure of $\prod\limits_{i=1}^n \mathbb{Z}_{2^i}$-additive cyclic codes}
In this section, we find a set of generators for a $\prod\limits_{i=1}^n \mathbb{Z}^{\alpha_i}_{2^i}$ additive cyclic code $\mathcal{C}$, as a $\mathbb{Z}_{2^n}[x]$-submodule of $\mathcal{R}_{\alpha_1,\alpha_2,\ldots,\alpha_n}.$ Now $\mathcal{C}$ and $\mathbb{Z}_{2^n}[x]/\langle x^{\alpha_n}-1\rangle$ are both
$\mathbb{Z}_{2^n}[x]$-submodules of $\mathcal{R}_{\alpha_1,\alpha_2,\ldots,\alpha_n}$. Define
$$\Psi: \mathcal{C} \rightarrow \mathbb{Z}_{2^n}[x]/\langle x^{\alpha_n}-1\rangle $$
by $\Psi(u_1(x),u_2(x),\ldots,u_n(x))=u_n(x).$\\
As we now treat codewords as polynomials, so for simplicity we write $f$ instead of $f(x)$ for all polynomials $f(x)$. So,
$\Psi(u_1,u_2,\ldots,u_n)=u_n.$
It is easy to check that $\Psi$ is a module homomorphism and $Im(\Psi)$ is an ideal in $\mathbb{Z}_{2^n}[x]/ \langle x^{\alpha_n}-1\rangle$ and also a $\mathbb{Z}_{2^n}[x]$-submodule of $\mathbb{Z}_{2^n}[x]/\langle x^{\alpha_n}-1\rangle$. So, we have from \cite{Roy17},
$$Im(\Psi)=\bigg\langle \displaystyle\sum_{j_n=0}^{n-1}2^{j_n}a_{nj_n}(x)\bigg\rangle=\langle a_{n0},2a_{n1},\ldots, 2^{n-1}a_{n(n-1)}\rangle $$
and $\ker(\Psi)=\{(u_1,\ldots,u_{n-1},0) \in \mathcal{C} \ |\ u_i \in \mathbb{Z}_{2^i}[x]/\langle x^{\alpha_i}-1\rangle  ~\forall ~ 
i=1,2,\ldots,n-1$\}.
So let $$I=\langle \{(u_1,u_2,\ldots,u_{n-1})\in \mathcal{R}_{\alpha_1,\ldots, \alpha_{n-1}} \ \mid \ (u_1,u_2,\ldots,u_{n-1},0) \in \mathcal{C}\}\rangle.$$

Obviously $I$ is a $\prod\limits_{i=1}^{n-1}\mathbb{Z}_{2^i}$-additive cyclic code. The following theorem gives the idea of the structure of a 
$\prod\limits_{i=1}^n \mathbb{Z}_{2^i}$-additive cyclic code $\mathcal{C}$.

\begin{theorem}\label{Th2}
A $\prod\limits_{i=1}^n \mathbb{Z}_{2^i}$-additive cyclic code $\mathcal{C}$ is generated by \\
$\mathcal{C}=\langle (a_1,0,0, \ldots,0),(l_{21},a_2,0,0,\ldots,0), (l_{31},l_{32},a_3,0,\ldots,0),\\ ~~~~~~~~~~~~~~~~~~~~~~~~~~~~~~~~~~~~~~ ~~~~~~~~~~~~~~~~~~~~~~~~~~~~~~~~~~~~~~~  \ldots, (l_{n1},l_{n2}, \ldots,l_{n(n-1)},a_n)\rangle,$ \\
where each generator, $(l_{i1},l_{i2},\ldots,\l_{i(i-1)},a_i,0,\ldots,0);~i=1,2,\ldots,n,$ is an $n$-tuple and $a_i=\displaystyle\sum_{j_i=0}^{i-1}2^{j_i}a_{ij_i}$ and $l_{i j_i};~ j_i=1,2,\ldots,i-1; i=1,2,\ldots,n$ are polynomials satisfying 
\begin{enumerate}
\item[(i)] $a_{i(i-1)}| a_{i(i-2)}| \cdots | a_{i0} | x^{\alpha_i}-1 \bmod 2^i \ \forall \ i=1,2,\ldots,n$
\item[(ii)] $\deg l_{(i+1)1}< \deg a_1$ and $\deg l_{(i+1)i}< \deg a_{i0} \ \forall \ i=1,2,\ldots,n-1$
\item[(iii)] $a_i | \frac{x^{\alpha_{i+1}}-1}{a_{(i+1)i}}l_{(i+1)i} \bmod 2^i \ \forall \ i=1,2,\ldots,n-1$
\item[(iv)] $a_{i-1} | d_il_{i(i-1)}-\frac{x^{\alpha_{i+1}}-1}{a_{(i+1)i}}l_{(i+1)(i-1)} \bmod 2^{i-1} \ \forall \ i=2,3,\ldots,n-1,$\\
where $a_id_i=\frac{x^{\alpha_{i+1}}-1}{a_{(i+1)i}}l_{(i+1)i} \bmod 2^i~\forall~i=1,2,\ldots,n-1.$

\end{enumerate}
\end{theorem}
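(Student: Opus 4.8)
The plan is to argue by induction on $n$, using the projection $\Psi$ onto the last coordinate as the engine of the induction. For the base case $n=1$, the code $\mathcal{C}$ is simply an ideal of $\mathbb{Z}_2[x]/\langle x^{\alpha_1}-1\rangle$, which is principal and generated by a divisor $a_1$ of $x^{\alpha_1}-1 \bmod 2$; this produces the single generator $(a_1)$ and condition (i), while conditions (ii)--(iv) are vacuous at this level.

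For the inductive step I would exploit the short exact sequence of $\mathbb{Z}_{2^n}[x]$-modules $0 \to \ker\Psi \to \mathcal{C} \xrightarrow{\Psi} \mathrm{Im}\,\Psi \to 0$. By \cite{Roy17}, $\mathrm{Im}\,\Psi = \langle a_n\rangle$ with $a_n = \sum_{j_n=0}^{n-1} 2^{j_n} a_{nj_n}$ and $a_{n(n-1)} \mid \cdots \mid a_{n0} \mid x^{\alpha_n}-1 \bmod 2^n$, which is exactly condition (i) at the level $i=n$. Choosing any preimage of $a_n$ under $\Psi$ and reducing its first $n-1$ entries by the division algorithm yields a generator $g_n = (l_{n1}, \ldots, l_{n(n-1)}, a_n)$ meeting the degree bounds in (ii). Since dropping the trailing zero carries $\ker\Psi$ isomorphically onto the code $I$, which is a $\prod_{i=1}^{n-1}\mathbb{Z}_{2^i}$-additive cyclic code, the induction hypothesis furnishes a triangular generating set $g_1, \ldots, g_{n-1}$ for $I$ (padded with a final zero) already satisfying (i)--(iv) for all indices below $n$. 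Splitting the exact sequence then gives $\mathcal{C} = \langle g_1, \ldots, g_{n-1}, g_n\rangle$.

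The substantive work is to extract the cross-level compatibility conditions (iii) and (iv). Here I would multiply the lifted generator $g_n$ by the annihilator $\tfrac{x^{\alpha_n}-1}{a_{n(n-1)}}$ of $a_n$: using the divisibility chain in (i) one checks that this kills the last coordinate, so the product lands in $\ker\Psi = I$, with $i$-th coordinate $\tfrac{x^{\alpha_n}-1}{a_{n(n-1)}}\,l_{ni} \bmod 2^i$. Requiring this vector to lie in the span of the lower generators forces, at the top active coordinate, the divisibility $a_{n-1} \mid \tfrac{x^{\alpha_n}-1}{a_{n(n-1)}}\,l_{n(n-1)} \bmod 2^{n-1}$, which is condition (iii) for $i=n-1$; writing this product as $a_{n-1}d_{n-1}$ defines $d_{n-1}$. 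Subtracting $d_{n-1}\,g_{n-1}$ clears that coordinate and pushes the residue down one level, and demanding membership of the result in the span of $g_1, \ldots, g_{n-2}$ gives, at the next coordinate, condition (iv) for $i=n-1$ (up to sign). The remaining instances of (iii) and (iv) for $i < n-1$ are precisely the relations carried up by the induction hypothesis on $I$.

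I expect the main obstacle to be the annihilator computation and the attendant $2$-adic bookkeeping: verifying that $\tfrac{x^{\alpha_n}-1}{a_{n(n-1)}}\,a_n \equiv 0 \bmod 2^n$ despite carries among the layers $2^{j_n}a_{nj_n}$, and then tracking the successive reductions against the triangular generators carefully enough to read off the exact polynomials appearing in (iii) and (iv). One must also confirm that these divisibilities are not merely necessary but, together with (i)--(ii), actually guarantee that the displayed set spans a well-defined $\mathbb{Z}_{2^n}[x]$-submodule closed under $x*(-)$, so that the stated conditions characterize $\mathcal{C}$ rather than only constrain it.
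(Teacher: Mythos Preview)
Your proposal is correct and follows essentially the same route as the paper: induction on $n$ via the projection $\Psi$ onto the last coordinate, with $\mathrm{Im}\,\Psi$ handled by \cite{Roy17} and $\ker\Psi$ identified with a $\prod_{i=1}^{n-1}\mathbb{Z}_{2^i}$-additive cyclic code to which the inductive hypothesis applies, the degree bounds in (ii) obtained by reduction against the lower generators, and conditions (iii)--(iv) at the top level extracted by multiplying $g_n$ by $\tfrac{x^{\alpha_n}-1}{a_{n(n-1)}}$ and then subtracting $d_{n-1}g_{n-1}$. Your last worry about sufficiency is unnecessary here, since the theorem (and the paper's proof) only asserts the existence of a generating set satisfying (i)--(iv), not that every such tuple generates a cyclic code; and the annihilator check you flag is immediate from the divisibility chain $a_{n(n-1)}\mid a_{nj}$, which makes each term $\tfrac{x^{\alpha_n}-1}{a_{n(n-1)}}\cdot 2^j a_{nj}$ a multiple of $x^{\alpha_n}-1$.
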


\begin{proof}
We prove this by applying the method of induction. For $n=1$, the case is similar to linear binary cyclic codes and hence, by well known theorems on binary cyclic codes, is true. For $n=2$, we know from \cite{Abualrub14} that 
$$\mathcal{C}=\langle (a_{10},0),(l_{21}, a_{20}+2a_{21})\rangle, $$
where $a_{10} | x^{\alpha_1}-1 \bmod 2$,
     $a_{21}|a_{20}|x^{\alpha_2}-1 \bmod 4$,
     $\deg l_{21}< \deg a_{10}$ and
     $a_{10}| \frac{x^{\alpha_2}-1}{a_{21}} l_{21} \bmod 2$

Hence the statement is true for $n=2$. Now let us assume that the given statement holds for $n=m-1$. We will prove it to be true for $n=m$.
Let $\mathcal{C}$ be a $\prod\limits_{i=1}^m \mathbb{Z}_{2^i}$-additive cyclic code. Consider the map 
$$\Psi: \mathcal{C} \rightarrow \mathbb{Z}_{2^m}[x]/ \langle x^{\alpha_m}-1 \rangle. $$

From \cite{Roy17} we have
$Im(\Psi)=\bigg\langle \displaystyle\sum_{j_m=0}^{m-1}2^{j_m}a_{mj_m}(x) \bigg\rangle$ where 
\begin{equation}
\label{eq1}
a_{m(m-1)} | a_{m(m-2)}| \cdots |a_{m0} | x^{\alpha_m}-1 \bmod 2^m
\end{equation}
and $\ker(\Psi)=\{(u_1,u_2, \ldots, u_{m-1},0) \in \mathcal{C} \ | \ u_i \in \mathbb{Z}_{2^i}[x]/\langle x^{\alpha_i}-1\rangle \ \forall \ i=1,2,\ldots, m-1\}$
Let $I=\{(u_1,u_2,\ldots, u_{m-1}) \in \mathcal{R}_{\alpha_1,\alpha_2,\ldots, \alpha_{m-1}} \ | \ (u_1,u_2,\ldots, u_{m-1},0) \in \mathcal{C}\}.$

Clearly $I$ is an ideal and obviously is a $\prod\limits_{i=1}^{m-1}\mathbb{Z}_{2^i}$-additive cyclic code. 
So, by the induction hypothesis,\\ $I=\langle (a_1,0,\ldots,0), (l_{21},a_2,0,\ldots,0), (l_{31},l_{32},a_3,0,\ldots,0), \ldots, \\ ~~~~~~~~~~~~~~~~~~~~~~~~~~~~~~~~~ ~~~~~~~~~~~~ (l_{(m-1)1}, l_{(m-1)2}
, \ldots, l_{(m-1)(m-2)},a_{m-1})\rangle,$ \\ where each generator, $(l_{i1},l_{i2},\ldots,\l_{i(i-1)},a_i,0,\ldots,0);~i=1,2,\ldots,m-1,$ is an $(m-1)$-tuple and $a_i=\displaystyle\sum_{j_i=0}^{i-1}2^{j_i}a_{ij_i}$ and $l_{i j_i};~ j_i=1,2,\ldots,i-1; i=1,2,\ldots,m-1$ are polynomials satisfying 
\begin{enumerate}
\item[(a)] $a_{i(i-1)}| a_{i(i-2)}| \cdots | a_{i0} | x^{\alpha_i}-1 \bmod 2^i \ \forall \ i=1,2,\ldots, m-1$
\item[(b)] $\deg l_{(i+1)1}<  \deg a_1, \deg l_{(i+1)i} < \deg a_{i0} \ \forall \ i=1,2,\ldots, m-2$ 
\item[(c)] $a_i | \frac{x^{\alpha_{i+1}}-1}{a_{(i+1)i}} l_{(i+1)i} \bmod 2^i \ \forall \ i=1,2,\ldots, m-2$
\item[(d)] $a_{i-1} | d_il_{i(i-1)}-\frac{x^{\alpha_{i+1}}-1}{a_{(i+1)i}} l_{(i+1)(i-1)} \bmod 2^{i-1} \ \forall \ i=2,3,\ldots, m-2$.
\end{enumerate}
For any, $(u_1,u_2, \ldots, u_{m-1},0) \in \ker \Psi, \exists ~ d_i(x) \in \mathbb{Z}_{2^m}[x];~ i \in \{1,2,\ldots,m-1\}$ such that
$$\begin{array}{ll}
(u_1,u_2, \ldots, u_{m-1},0) &=d_1 * (a_1,0,\ldots,0)+d_2*(l_{21},a_2,0,\ldots,0)+d_3*(l_{31},l_{32},a_3,0,\ldots,0)\\
 &~~~~~~~~~~~~~
+\cdots+ d_{m-1}*(l_{(m-1)1},l_{(m-1)2}, \ldots, l_{(m-1)(m-2)},a_{m-1},0).
\end{array}$$

So, we may assume that $\ker(\Psi)$ is a $\mathbb{Z}_{2^m}[x]$-submodule of $\mathcal{C}$ generated by 

$\langle (a_1,0,\ldots,0), (l_{21},a_2,0,\ldots,0), \ldots, (l_{(m-1)1}, l_{(m-1)2},\ldots, l_{(m-1)(m-2)},a_{m-1},0)\rangle $, each being an $m$-tuple.

Again by the First Isomorphism theorem, $$\mathcal{C}/\ker \Psi \cong Im (\Psi) ,$$
i.e, 
$$\mathcal{C}/\ker \Psi \cong \bigg\langle\displaystyle\sum_{j_m=0}^{m-1}2^{j_m}a_{mj_m} \bigg\rangle.$$
Now, let $(l_{m1},l_{m2},\ldots,l_{m(m-1)},a_m) \in \mathcal{C}$ be such that 
$$\Psi(\langle l_{m1},l_{m2}, \ldots,l_{m(m-1)},a_m\rangle)=\langle a_m\rangle,$$
where $a_m=\displaystyle\sum_{j_m=0}^{m-1}2^{j_m}a_{mj_m}$.

Hence a $\prod\limits_{i=1}^m \mathbb{Z}_{2^i}$-additive cyclic code can be generated by $m$ elements of the form
 $(a_1,$ $0,\ldots,0),$ $ (l_{21},a_2,0,\ldots,0),\ldots, (l_{(m-1)1},l_{(m-1)2}, \ldots, l_{(m-1)(m-2)},a_{m-1},0),$  
$(l_{m1},$ $l_{m2},$ $ \ldots, l_{m(m-1)},a_m)$. 

So, any element in $\mathcal{C}$ can be represented as $$d_1*(a_1,0,\ldots,0)+d_2*(l_{21},a_2,0,\ldots,0) +\cdots+d_m*(l_{m1},l_{m2}, \ldots,l_{m(m-1)},a_m).$$ So, $$\mathcal{C}=\langle (a_1,0,\ldots,0),(l_{21},a_2,0,\ldots,0), (l_{31},l_{32},a_3,0,\ldots,0)
 , \ldots, (l_{m1},l_{m2}, \ldots, l_{m(m-1)},a_m) \rangle$$
each generator being an $m$-tuple and $a_i=\sum\limits_{j_i=0}^{i-1}2^{j_i}a_{ij_i} \ \forall \ i=1,2,\ldots,m.$
\begin{enumerate}
\item[(i)] Clearly (a) together with~\eqref{eq1} proves (i).

\item[(ii)] Let $\deg l_{m_1} \geq \deg a_1$ with $\deg l_{m1} - \deg a_1=k$. Now let 
$\mathcal{D}$ be a $\prod\limits_{i=1}^m \mathbb{Z}_{2^i}$-additive cyclic code with generators \\
$~~~~(a_1,0,\ldots,0), (l_{21},a_2,0,\ldots,0), \ldots, (l_{(m-1)1},l_{(m-1)2}, \ldots, l_{(m-1)(m-2)},a_{m-1},0), \\ ~~~~~~~~~~~~~~~~~~~~~~~~~~~~~~~~~~~~~~~~~~~~~~~~~~~~ ~~~~~~~~~~~~~~~~~~~ (l_{m1}+x^ka_1,l_{m2}, \ldots, l_{m(m-1)},a_m) \\ =(a_1,0,\ldots,0), (l_{21},a_2,0,\ldots,0), \ldots, (l_{(m-1)1},l_{(m-1)2}, \ldots, l_{(m-1)(m-2)},a_{m-1},0), \\ ~~~~~~~~~~~~~~~~~~~~~~~~~~~~~~~~~~~~~~~~~~~~~~~~~~~~ ~~~~~~~~~~~~~~~~~~~~ ((l_{m1},l_{m2}, \ldots, l_{m(m-1)},a_m)+x^k*(a_1,0,\ldots,0)).$ So $\mathcal{D} \subseteq \mathcal{C}$. Again $(l_{m1},l_{m2},\ldots,l_{m(m-1)},a_m)=(l_{m1}+x^ ka_1, 
l_{m2},\ldots, l_{m(m-1)},a_m)-x^i*(a_1,0,\ldots,0)$. Hence $\mathcal{C}=\mathcal{D}$. 
Thus we may assume 
\begin{equation}
\label{eq2}
\deg l_{m1} < \deg a_1
\end{equation}

So, (b) together with~\eqref{eq2} proves the first statement of (ii). Others may also proved similarly.

\item[(iii)] Clearly \\
$~~~~~~~~~~~~~\frac{x^{\alpha_m}-1}{a_{m(m-1)}}*(l_{m1},l_{m2},\ldots,l_{m(m-1)},a_m) \\ ~~~~~~~~~~~~~~~~~~~~~~~~~~~~~~~~~~~~~~~~~~~~~~ =(\frac{x^{\alpha_m}-1}{a_{m(m-1)}}l_{m1},\frac{x^{\alpha_m}-1}{a_{m(m-1)}}l_{m2}, \ldots, \frac{x^{\alpha_m}-1}{a_{m(m-1)}}l_{m(m-1)},0 ).$

Since \\
$~~~~~~~~~~~~~~\Psi(\frac{x^{\alpha_m}-1}{a_{m(m-1)}}l_{m1},\frac{x^{\alpha_m}-1}{a_{m(m-1)}}l_{m2}, \ldots, \frac{x^{\alpha_m}-1}{a_{m(m-1)}}l_{m(m-1)},0 )=0, \\~~~~~~~~~~~~~~~~~~~~~~~~~~~~~~~~~~~~~~~~~~~~~~~ $
$(\frac{x^{\alpha_m}-1}{a_{m(m-1)}}l_{m1},\frac{x^{\alpha_m}-1}{a_{m(m-1)}}l_{m2}, \ldots, \frac{x^{\alpha_m}-1}{a_{m(m-1)}}
l_{m(m-1)},0) \in \ker \Psi$. 

That is \\ $~~~~(\frac{x^{\alpha_m}-1}{a_{m(m-1)}}l_{m1},\frac{x^{\alpha_m}-1}{a_{m(m-1)}}l_{m2}, \ldots, \frac{x^{\alpha_m}-1}{a_{m(m-1)}}
l_{m(m-1)},0)\\ =e_1(x)*(a_1,0,\ldots,0)+e_2(x)*(l_{21},a_2,0,\ldots,0)+ \cdots+ \\ ~~~~~~~~~~~~~~~~~~~~~~~~~~~~~~~e_{m-1}(x)*(l_{(m-1)1},\ldots,l_{(m-1)(m-2)},a_{m-1},0) \\ =
(e_1a_1+e_2l_{21}+\cdots +e_{m-1}l_{(m-1)1}, e_2a_2+e_3l_{32}+\cdots+e_{m-1}l_{(m-1)2}, \ldots, \\ ~~~~~~~~~~~~~~~~~~~~~~~~~~~~~~~~~~~~~~~~~~~ ~~~~~~~~~~~~ e_{m-2}a_{m-2}+e_{m-1}l_{(m-1)(m-2)}, e_{m-1}a_{m-1},0)$ for some polynomials $e_i(x);~i=1,2,\ldots,m-1.$

So, taking $d_{m-1}=e_{m-1}$,
\begin{equation}
\label{eq3}
d_{m-1}a_{m-1}=\frac{x^{\alpha_m} -1}{a_{m(m-1)}}l_{m(m-1)} \bmod 2^{m-1} 
\end{equation}

Equation (c) together with~\eqref{eq3} proves (iii). 

\item[(iv)] We have $a_{m-1}d_{m-1}=\frac{x^{\alpha_m}-1}{a_{m(m-1)}}l_{m(m-1)}.$
So 
$$\begin{array}{ll}
\beta_1 &=d_{m-1}*(l_{(m-1)1}, l_{(m-1)2}, \ldots, l_{(m-1)(m-2)}, a_{m-1},0) \\ &=(d_{m-1}l_{(m-1)1},d_{m-1}l_{(m-1)2}, \ldots, d_{(m-1)}l_{(m-1)(m-2)}, d_{m-1}a_{m-1},0).
\end{array}$$
Again 
$$\begin{array}{ll}
\beta_2 &=\frac{x^{\alpha_m}-1}{a_{m(m-1)}}*(l_{m1},l_{m2}, \ldots, l_{m(m-1)},a_m) \\ &=
(\frac{x^{\alpha_m}-1}{a_{m(m-1)}}l_{m1},\frac{x^{\alpha_m}-1}{a_{m(m-1)}}l_{m2}, \ldots, \frac{x^{\alpha_m}-1}{a_{m(m-1)}}l_{m(m-1)},0 ) \\ &=
(\frac{x^{\alpha_m}-1}{a_{m(m-1)}}l_{m1},\frac{x^{\alpha_m}-1}{a_{m(m-1)}}l_{m2}, \ldots, \frac{x^{\alpha_m}-1}{a_{m(m-1)}}l_{m(m-2)},d_{m-1} a_{m-1},0).
\end{array}$$
Thus $\beta_1-\beta_2 \in \ker \Psi$. The last two components are zero. So, we may treat $\beta_1-\beta_2$ as a codeword in the $\prod\limits_{i=1}^{m-2}\mathbb{Z}_{2^i}$-additive cyclic code
generated by $$\langle (a_1,0,\ldots,0), (l_{21},a_2,0,\ldots,0),\ldots, (l_{(m-2)1}, l_{(m-2)2}, \ldots, l_{(m-2)(m-3)},a_{m-2}),$$ where each generator is an $(m-2)$-tuple. Hence by (d)
\begin{equation}
\label{eq4}
a_{m-2} | d_{m-1}l_{(m-1)(m-2)}-\frac{x^{\alpha_m}-1}{a_{m(m-1)}}l_{m(m-2)} \bmod 2^{m-2}
\end{equation}
Clearly (d) along with~\eqref{eq4} proves (iv). 

Thus we have our theorem.
\end{enumerate}
\end{proof}

We now find a minimal spanning set for a $\prod\limits_{i=1}^n\mathbb{Z}_{2^i}$ additive cyclic code $\mathcal{C}$ generated by 
$\langle (a_1,0, \ldots,0),$ $ (l_{21},a_2,0, \ldots,0), \ldots, (l_{n1},l_{n2}, \ldots, l_{n(n-1)},a_{n})\rangle $
with $a_i=\sum\limits_{j_i=0}^{i-1}2^{j_i}a_{ij_i},$ where $a_{ij}$'s and $l_{ij}$'s are defined as above. First we prove the following lemma.

\begin{lemma}\label{Lm2}
Let $i \in \{1,2,\ldots,n\}$. There exists polynomials $f_{j_i i}; 1\leq j_i \leq i-1$ such that $$l_{i j_i} h_{i(i-1)}= a_{j_i} f_{j_i i} + \sum\limits_{k= j_i +1}^{i-1} l_{k j_i} f_{ki}$$ where $a_{i j_i} h_{i j_i} = x^{\alpha_i}-1;~ 0\leq j_i \leq i-1,2\leq i\leq n.$
\end{lemma}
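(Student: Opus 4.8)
The plan is to obtain the identity by multiplying the $i$-th generator of $\mathcal{C}$, which I will write as $g_i = (l_{i1}, l_{i2}, \ldots, l_{i(i-1)}, a_i, 0, \ldots, 0)$, by the scalar $h_{i(i-1)} = \frac{x^{\alpha_i}-1}{a_{i(i-1)}}$. This is a genuine polynomial in $\mathbb{Z}_{2^i}[x]$ because condition (i) of Theorem~\ref{Th2} gives $a_{i(i-1)} \mid x^{\alpha_i}-1 \bmod 2^i$; I will lift it to $\mathbb{Z}_{2^n}[x]$ so that the scalar multiplication $*$ applies. The first step is to compute the $i$-th coordinate of $h_{i(i-1)} * g_i$. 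Writing $a_i = \sum_{j=0}^{i-1} 2^j a_{ij}$ and using the divisibility chain $a_{i(i-1)} \mid a_{i(i-2)} \mid \cdots \mid a_{i0}$ from (i), every $a_{ij}$ is a polynomial multiple of $a_{i(i-1)}$, so $h_{i(i-1)} a_{ij} = \frac{a_{ij}}{a_{i(i-1)}}(x^{\alpha_i}-1) \equiv 0$ in $\mathbb{Z}_{2^i}[x]/\langle x^{\alpha_i}-1\rangle$. Summing, $h_{i(i-1)} a_i \equiv 0 \bmod (2^i, x^{\alpha_i}-1)$, so the $i$-th coordinate of $h_{i(i-1)} * g_i$ vanishes.

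Consequently $h_{i(i-1)} * g_i = (h_{i(i-1)} l_{i1}, \ldots, h_{i(i-1)} l_{i(i-1)}, 0, \ldots, 0)$, a codeword whose coordinates $i, i+1, \ldots, n$ are all zero. The second step is to show that any such codeword lies in the submodule generated by the first $i-1$ generators $g_1, g_2, \ldots, g_{i-1}$. This follows by iterating the kernel description established in the proof of Theorem~\ref{Th2}: the map $\Psi$ shows that the codewords of $\mathcal{C}$ vanishing in coordinate $n$ are exactly $\langle g_1, \ldots, g_{n-1}\rangle$; this subcode is itself a $\prod_{k=1}^{n-1}\mathbb{Z}_{2^k}$-additive cyclic code, and applying the same argument to it shows codewords vanishing in coordinates $n-1$ and $n$ are $\langle g_1, \ldots, g_{n-2}\rangle$, and so on down to $i$. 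Hence there exist polynomials $f_{ki}$, $1 \leq k \leq i-1$, with $h_{i(i-1)} * g_i = \sum_{k=1}^{i-1} f_{ki} * g_k$.

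The final step is to read off the $j_i$-th coordinate of this equality. Since $g_k = (l_{k1}, \ldots, l_{k(k-1)}, a_k, 0, \ldots, 0)$, its $j_i$-th coordinate equals $l_{kj_i}$ when $k > j_i$, equals $a_{j_i}$ when $k = j_i$, and equals $0$ when $k < j_i$. Summing over $1 \leq k \leq i-1$ and reordering the commuting polynomial products then gives $l_{ij_i} h_{i(i-1)} = a_{j_i} f_{j_i i} + \sum_{k=j_i+1}^{i-1} l_{kj_i} f_{ki}$, which is precisely the asserted identity (read in the $j_i$-th coordinate ring $\mathbb{Z}_{2^{j_i}}[x]/\langle x^{\alpha_{j_i}}-1\rangle$). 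I expect the main obstacle to be making the second step rigorous, that is, justifying cleanly that a codeword supported on the first $i-1$ coordinates lies in $\langle g_1, \ldots, g_{i-1}\rangle$; once that membership is secured, the identity itself drops out immediately by the coordinate comparison in the third step.
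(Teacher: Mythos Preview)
Your proposal is correct and rests on the same key move as the paper's proof: multiply the $i$-th generator $g_i$ by $h_{i(i-1)}$ to annihilate its $i$-th coordinate, then use the kernel description from Theorem~\ref{Th2} to write the result as a $\mathbb{Z}_{2^n}[x]$-combination of $g_1,\ldots,g_{i-1}$ and read off coordinates. The only difference is presentational: the paper frames the argument as an induction on $i$ and then, for fixed $i=m$, peels off the coefficients $f_{(m-1)m},f_{(m-2)m},\ldots,f_{1m}$ one at a time by successively zeroing out coordinates $m-1,m-2,\ldots,1$, whereas you obtain all the $f_{ki}$ in one stroke from the single membership $h_{i(i-1)}*g_i\in\langle g_1,\ldots,g_{i-1}\rangle$ and then compare coordinates.
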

\begin{proof}
We prove the given theorem by induction on $i$.

From Theorem~\eqref{Th2} we have $l_{21} h_{21}=a_1 f_{11}$ where $f_{11}=d_1.$ So the given statement is true for $i=2$.

Let us assume that the given statement is true for $i=2,3,\ldots,m-1.$ We will prove it to be true for $i=m$.

Consider $j_m = m-1$. Clearly from Theorem~\eqref{Th2} we have $l_{m(m-1)} h_{m(m-1)}=a_{m-1} d_{m-1}$. So, taking $f_{(m-1)m}=d_{m-1}$ we have our statement to be true for $j_m = m-1$. Clearly, 

$ f_{(m-1)m}*(l_{(m-1)1}, l_{(m-1)2},\ldots, l_{(m-1)(m-2)},a_{m-1},0,\ldots,0)\\ ~~~~~~~~~~~~~~~~~~~~~~~~~~~~~~~~~~~~~~~~~~~~~~~~~~~~- h_{m(m-1)}* (l_{m1},l_{m2},\ldots,l_{m(m-1)},0,\ldots,0)\\ = (f_{(m-1)m} l_{(m-1)1}- l_{m1} h_{m(m-1)},\ldots,\\~~~~~~~~~~~~~~~~~~~~~~~~~~~~~~~~~~~~~~~~~~~~~ f_{(m-1)m} l_{(m-1)(m-2)}- l_{m(m-2)} h_{m(m-1)},0,\ldots,0)\in \ker\Psi.$

So, $-a_{m-2} f_{(m-2)m} = f_{(m-1)m} l_{(m-1)(m-2)}- l_{m(m-2)} h_{m(m-1)}$ for some polynomial $f_{(m-2)m}.$\\ i.e. $$ l_{m(m-2)} h_{m(m-1)}= a_{m-2} f_{(m-2)m}  +  l_{(m-1)(m-2)} f_{(m-1)m}.$$ So, for $i=m,~ j_m=m-2$ the statement is true. Proceeding similarly as above after $j_m$ steps we have \\$(l_{m1} h_{m(m-1)} -\sum\limits_{k=j_m-1}^{m-1} l_{k1} f_{km}, l_{m2} h_{m(m-1)} -\sum\limits_{k=j_m-1}^{m-1} l_{k2} f_{km}, \\ ~~~~~~~~~~~~~~~~~~~~~~~~~~~~~~~~~~~~~~~~~~~~~~~~~~~~~ \ldots, l_{m(j_m-1)} h_{m(m-1)} -\sum\limits_{k=j_m-1}^{m-1} l_{k(j_m-1)} f_{km},0\ldots,0)\in \ker \Psi.$ \\ Hence there is a polynomial $f_{(j_m-2)m}$ such that
$$a_{j_m-2} f_{(j_m-2)m} = l_{m(j_m-2)} h_{m(m-1)} - \sum\limits_{k=j_m-1}^{m-1} l_{k(j_m-2)} f_{km},$$  i.e. $$ l_{m(j_m-2)} h_{m(m-1)} = a_{j_m-2} f_{(j_m-2)m}  + \sum\limits_{k=j_m-1}^{m-1} l_{k(j_m-2)} f_{km}.$$ At last we have $(l_{m1} h_{m(m-1)} - \sum\limits_{k=2}^{m-1} l_{k1} f_{km},0,\ldots,0)\in \ker\Psi$ and hence there exists a polynomial $f_{1m}$ such that $a_1 f_{1m} = l_{m1} h_{m(m-1)} - \sum\limits_{k=2}^{m-1} l_{k1} f_{km}$. i.e. $$l_{m1} h_{m(m-1)}= a_1 f_{1m} + \sum\limits_{k=2}^{m-1} l_{k1} f_{km}.$$ So the given statement is true for $i=m$. Thus by principle of induction we have our desired result.
\end{proof}

\begin{theorem}
\label{th3}
Let $$\mathcal{C}=\langle(a_1,0,\ldots,0), (l_{21},a_2,0,\ldots,0), (l_{31},l_{32},a_3,0,\ldots,0), \ldots, (l_{n1},l_{n2}, \ldots, l_{n(n-1)},a_n) \rangle$$ where $a_i=\sum\limits_{j_i=0}^{i-1} 2^{j_i}a_{ij_i}$ satisfying 
$a_{ij_i}h_{ij_i}=x^{\alpha_i}-1; a_{ij_i}m_{ij_i}=a_{i(j_i-1)}; 0\leq j_i \leq i-1$. 
Let $$\mathcal{S}_{i0}=\bigcup\limits_{k=0}^{\deg h_{i0}-1}\{x^k*(l_{i1},l_{i2},\ldots, l_{i(i-1)},a_i,0,\ldots,0) \},$$
    $$\mathcal{S}_{ij_i}=\bigcup\limits_{k=0}^{\deg m_{ij_i}-1}\bigg\{x^k*(l_{i1}h_{i(j_i-1)},l_{i2}h_{i(j_i-1)},\ldots, l_{i(i-1)}h_{i(j_i-1)},
            \sum_{p=j_i}^{i-1}a_{ip}h_{i(j_i-1)},0,\ldots,0) \bigg\},$$
$0 \leq j_i \leq i-1, 1 \leq i \leq n.$
Then $\mathcal{S}=\bigcup\limits_{i=1}^n (\bigcup\limits_{j_i=0}^{i-1}\mathcal{S}_{ij_i})$ forms a minimal spanning set for the $\prod\limits_{i=1}^n \mathbb{Z}_{2^i}$-additive cyclic code $ \mathcal{C}$ and $\mathcal{C}$ has $2^{\sum\limits_{i=1}^n ( i \deg h_{i0} + \sum\limits_{j_i=1}^{i-1}(i-j_i)\deg m_{ij_i})}$ codewords. For any codeword $c$ there are polynomials  $e_{i j_i}, ~0 \leq j_i \leq i-1,~1\leq i\leq n$ satisfying $\deg e_{i0} \leq \deg h_{i0}-1$ and $\deg e_{i j_i} \leq \deg m_{i j_i}-1 ~ \forall ~ 1 \leq j_i \leq i-1, ~1\leq i \leq n $ such that $c=\sum\limits_{i=1}^n (e_{i0}*(l_{i1},l_{i2},\ldots, l_{i(i-1)},a_i,0,\ldots,0) \\~~~~~~ + \sum\limits_{j_i=1}^{i-1} e_{i j_i}*(l_{i1}h_{i(j_i-1)},l_{i2}h_{i(j_i-1)},\ldots, l_{i(i-1)}h_{i(j_i-1)},
            \sum\limits_{p=j_i}^{i-1}a_{ip}h_{i(j_i-1)},0,\ldots,0)).$
\end{theorem}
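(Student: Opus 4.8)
The plan is to argue by induction on $n$, mirroring the structure of Theorem~\ref{Th2}, and to split the statement into three interlocking claims: that $\mathcal{S}$ spans $\mathcal{C}$, that the displayed representation with the stated degree bounds is always available, and that $\mathcal{S}$ is minimal with $|\mathcal{C}|$ equal to the asserted power of two. The base case $n=1$ is the classical binary cyclic code $\langle a_1\rangle$, for which $\{x^k*(a_1) : 0\le k<\deg h_{10}\}$ is the standard minimal spanning set and $|\mathcal{C}|=2^{\deg h_{10}}$; here no block with $j_i\ge 1$ occurs, so the formula specialises correctly.

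For the inductive step I would reuse the module homomorphism $\Psi:\mathcal{C}\to\mathbb{Z}_{2^n}[x]/\langle x^{\alpha_n}-1\rangle$. Its image is the ideal $\langle a_n\rangle$, whose minimal spanning set over $\mathbb{Z}_{2^n}$ is given in \cite{Roy17} precisely as the $\Psi$-images of the families $\mathcal{S}_{n0},\mathcal{S}_{n1},\ldots,\mathcal{S}_{n(n-1)}$, contributing $2^{\,n\deg h_{n0}+\sum_{j_n=1}^{n-1}(n-j_n)\deg m_{nj_n}}$ elements. Given a codeword $c$, I first express $\Psi(c)$ through this single-ring spanning set with coefficients $e_{nj_n}$ obeying $\deg e_{n0}\le \deg h_{n0}-1$ and $\deg e_{nj_n}\le \deg m_{nj_n}-1$; subtracting the corresponding combination of the genuine codewords in $\bigcup_{j_n}\mathcal{S}_{nj_n}$ leaves a codeword in $\ker(\Psi)$. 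Since $\ker(\Psi)$ is the embedded copy of the $\prod_{i=1}^{n-1}\mathbb{Z}_{2^i}$-additive cyclic code $I$, the induction hypothesis rewrites the remainder through $\bigcup_{i=1}^{n-1}\bigcup_{j_i}\mathcal{S}_{ij_i}$ with the required bounds, and concatenation yields the claimed representation of $c$.

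For the counting and minimality I would compute additive orders: an element of $\mathcal{S}_{i0}$ has last nonzero coordinate $x^k a_i$, which reduces mod $2$ to a nonzero polynomial, so it has order $2^i$; an element of $\mathcal{S}_{ij_i}$ has last nonzero coordinate divisible by exactly $2^{j_i}$, so it has order $2^{i-j_i}$. The product of all these orders is exactly $2^{\sum_{i=1}^n ( i\deg h_{i0}+\sum_{j_i=1}^{i-1}(i-j_i)\deg m_{ij_i})}$. Independently, the First Isomorphism Theorem gives $|\mathcal{C}|=|Im(\Psi)|\cdot|\ker(\Psi)|$; inserting the \cite{Roy17} count for $\langle a_n\rangle$ and the inductive count for $I$ produces the same power of two. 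Since a finite abelian group generated by elements whose orders multiply to exactly the group order must be their internal direct sum, this forces $\mathcal{S}$ to be minimal, makes the count exact, and renders the degree-bounded representation unique.

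The delicate point, and the step I expect to be the main obstacle, is the coupling between layers through the off-diagonal polynomials $l_{ij}$: after reducing the top layer and passing to $\ker(\Psi)$, the induced lower coordinates are products $l_{n\,\cdot}h_{n(\cdot)}$ rather than the clean generators of $I$, and one must verify that these already lie in the span of the lower generators so that no extra spanning element is needed and the degree bounds survive. This is exactly what Lemma~\ref{Lm2} supplies: the identity $l_{ij_i}h_{i(i-1)}=a_{j_i}f_{j_i i}+\sum_{k=j_i+1}^{i-1}l_{kj_i}f_{ki}$ shows that the deepest reduction $h_{i(i-1)}*(l_{i1},\ldots,l_{i(i-1)},a_i,0,\ldots,0)$ collapses into the lower-layer span, which both prevents over-counting and secures minimality. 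Keeping the bookkeeping of these cross terms consistent with the degree bounds across all layers is the genuine technical hurdle, while the filtration count serves as the independent check that no spanning element is superfluous.
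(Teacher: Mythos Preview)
Your plan is sound and reaches the same conclusion, but by a genuinely different route from the paper. The paper does \emph{not} induct on $n$: it starts from an arbitrary expression $c=\sum_i d_i*(l_{i1},\ldots,l_{i(i-1)},a_i,0,\ldots,0)$ and, for each $i$ separately, runs the division algorithm repeatedly ($d_i=q_{i1}h_{i0}+r_{i1}$, then $q_{i1}=q_{i2}m_{i1}+r_{i2}$, and so on through all levels $j_i$), peeling off a piece in $Span(\mathcal{S}_{ij_i})$ at each step. The residual term after the last division is $q_{ii}*(l_{i1}h_{i(i-1)},\ldots,l_{i(i-1)}h_{i(i-1)},0,\ldots,0)$, whose $i$-th coordinate has vanished; Lemma~\ref{Lm2} is then invoked to rewrite this residual as a $\mathbb{Z}_{2^n}[x]$-combination of the lower generators $(a_1,0,\ldots),(l_{21},a_2,0,\ldots),\ldots$, placing it in $Span(\bigcup_{k<i}\mathcal{S}_{k0})$. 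Minimality and the count are then asserted rather briefly.

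Your inductive scheme via $\Psi$ is cleaner in one respect: once you have subtracted the $\mathcal{S}_{nj_n}$-combination and landed in $\ker(\Psi)$, the induction hypothesis for the $\prod_{i=1}^{n-1}\mathbb{Z}_{2^i}$-code $I$ handles the remainder \emph{en bloc}. In particular, the cross terms $l_{nk}h_{n(j_n-1)}$ are already packaged inside the $\mathcal{S}_{nj_n}$ elements and need no separate decomposition, so Lemma~\ref{Lm2} is actually superfluous in your framework; it is the key device in the paper's direct argument, not in yours. Your flagging it as ``the main obstacle'' is therefore misplaced for your own proof, though it is an accurate diagnosis of where the paper's approach spends its effort.

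One small gap to patch: your additive-order computation reads the order of an $\mathcal{S}_{ij_i}$ element from its $i$-th coordinate alone, but the lower coordinates $l_{ik}h_{i(j_i-1)}\in\mathbb{Z}_{2^k}^{\alpha_k}$ may carry a larger $2$-power order, so the product of orders can exceed $|\mathcal{C}|$ and the direct-sum inference does not follow directly. This is harmless for you, since your independent count $|\mathcal{C}|=|Im(\Psi)|\cdot|\ker(\Psi)|$ together with the surjective degree-bounded parametrisation already forces that parametrisation to be a bijection, which yields both the count and minimality; just rephrase the minimality step that way. The paper, for its part, is at least as terse on this point.
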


\begin{proof}
Let $c=c(x) \in \mathcal{C}$, then $\exists ~ d_i \in \mathbb{Z}_{2^n}[x]$ for $i=1,2,\ldots,n$ such that \\ 
$$\begin{array}{ll}
c &=d_1*(a_1,0,\ldots,0)+d_2 *(l_{21},a_2,0,\ldots,0)+ \cdots+d_n*(l_{n1},l_{n2}, \ldots, l_{n(n-1)},a_n) \\ &=\sum\limits_{i=1}^n d_i*(l_{i1},l_{i2}, \ldots, l_{i(i-1)},a_i,0,\ldots, 0).
\end{array}$$
If $\deg d_1 \leq \deg h_{10}-1$, then $d_1*(a_1,0,\ldots,0) \in Span (\mathcal{S}_{10})$. Otherwise by division algorithm, there exist
polynomials $q_{11}(x), r_{11}(x)$ such that $d_1=h_{10}a_{11}+r_{11}$ where either $r_{11}=0$ or $\deg r_{11} < \deg h_{10}$.
Then $d_1*(a_1,0, \ldots, 0)=r_{11}*(a_1,0,\ldots,0) \in Span(\mathcal{S}_{10}).$
Now consider $d_i;~ i \in\{2,3,$ $\ldots,n\}$.
Clearly if $\deg d_i \leq \deg h_{i0}-1$, then 
$d_i*(l_{i1},l_{i2},\ldots,l_{i(i-1)},a_i,0,\ldots,0)$ $\in Span(\mathcal{S}_{i0})$. Otherwise by division algorithm
there are polynomials $q_{i1}(x), r_{i1}(x)$ such that $d_i=q_{i1}h_{i0}+r_{i1}$ where either $r_{i1}=0$ or 
$0 \leq \deg r_{i1}<\deg h_{i0}$. So, \\
$~~~~~d_i*(l_{i1},l_{i2},\ldots, l_{i(i-1)},a_i,0,\ldots,0) \\ =q_{i1}*(l_{i1}h_{i0}, l_{i2}h_{i0}, \ldots, l_{i(i-1)}h_{i0},\sum\limits_{p=1}^{i-1}2^pa_{ip}h_{i0},0,\ldots,0) \\ ~~~~~~~~~~~~~~~~~~~~~~~~~~~~~~~~~~~~~~~~~~~~~~~~~~~~~+r_{i1}*(l_{i1},l_{i2}, \ldots, l_{i(i-1)},a_i,0,\ldots,0).$

Clearly $r_{i1}*(l_{i1},l_{i2}, \ldots, l_{i(i-1)},a_i,0,\ldots,0) \in Span(\mathcal{S}_{i0}).$
If $\deg q_{i1}<\deg m_{i1},$ then $q_{i1}*(l_{i1}h_{i0},l_{i2}h_{i0}, \ldots, l_{i(i-1)}h_{i0}, \sum\limits_{p=1}^{i-1}2^pa_{ip}h_{i0}, 0,\ldots,0)
\in Span(\mathcal{S}_{i1}).$ Otherwise by division algorithm there are polynomials $q_{i2}(x), r_{i2}(x)$ such that 
$q_{i1}=q_{i2}m_{i1}+r_{i2}$ where $r_{i2}=0$ or $0 \leq r_{i2}< \deg m_{i1}$.

Now $a_{ij_i}m_{ij_i}=a_{i(j_i-1)}  \ \forall ~ 1 \leq j_i \leq i-1.$ So, 
$a_{ij_i}m_{ij_i}h_{i(j_i-1)}=a_{i(j_i-1)}h_{i(j_i-1)}=x^{\alpha_i}-1 =a_{ij_i}h_{ij_i}$.

$\Rightarrow m_{ij_i}h_{i(j_i-1)}=h_{ij_i}.$

So, $ q_{i1}*(l_{i1}h_{i0},l_{i2}h_{i0}, \ldots, l_{i(i-1)}h_{i0}, \sum\limits_{p=1}^{i-1}2^pa_{ip}h_{i0},0,\ldots,0)\\=
q_{i2}*(l_{i1}h_{i1},l_{i2}h_{i1}, \ldots, l_{i(i-1)}h_{i1}, \sum\limits_{p=2}^{i-1}2^pa_{ip}h_{i1},0,\ldots,0)+ \\ ~~~~~~~~~~~~~~~~~~~~~~~~~~~~~~~~r_{i2}*(l_{i1}h_{i0},l_{i2}h_{i0}, \ldots, l_{i(i-1)}h_{i0}, \sum\limits_{p=1}^{i-1}2^pa_{ip}h_{i0},0,\ldots,0).$

Clearly $r_{i2}*(l_{i1}h_{i0},l_{i2}h_{i0}, \ldots, l_{i(i-1)}h_{i0}, \sum\limits_{p=1}^{i-1}2^pa_{ip}h_{i0},0,\ldots,0) \in Span(\mathcal{S}_{i1})$.

Now if $\deg q_{i2}< \deg m_{i2},$ then \\
$q_{i2}*(l_{i1}h_{i1},l_{i2}h_{i1}, \ldots, l_{i(i-1)}h_{i1}, \sum\limits_{p=2}^{i-1}2^pa_{ip}h_{i1},0,\ldots,0) \in Span(\mathcal{S}_{i2}).$
Otherwise there are polynomials $q_{i3}(x), r_{i3}(x)$ such that $q_{i2}=q_{i3}m_{i2}+r_{i3}$ where either $r_{i3}=0$ or 
$0 \leq \deg r_{i3}< \deg m_{12}.$ Proceedings similarly as above if we find $j_i \in \{1,2,\ldots,i-2\}$ such that 
$$q_{ij_i}*(l_{i1}h_{i(j_i-1)}, l_{i2}h_{i(j_i-1)}, \ldots, l_{i(i-1)}h_{i(j_i-1)}, \sum\limits_{p=j_i}^{i-1} 2^p a_{ip}h_{i(j_i-1)},0,\ldots,0)
\in Span(\mathcal{S}_{ij_i})$$
then we are done. Otherwise we must prove that \\
$ q_{i(i-1)}*(l_{i1}h_{i(i-2)}, l_{i2}h_{i(i-2)}, \ldots, l_{i(i-1)}h_{i(i-2)}, 2^{i-1}a_{i(i-1)}h_{i(i-2)},0,\ldots,0) \in Span(\mathcal{S}).$

If $\deg q_{i(i-1)}< \deg m_{i(i-1)}$ then \\
 $ ~~~~~~~~~~~~ q_{i(i-1)}*(l_{i1}h_{i(i-2)}, l_{i2}h_{i(i-2)}, \ldots, l_{i(i-1)}h_{i(i-2)}, \\ ~~~~~~~~~~~~~~~~~~~~~~~~~~~~~~~~~~~~~~~~~~~ 2^{i-1}a_{i(i-1)}h_{i(i-2)},0,\ldots,0) \in Span(\mathcal{S}_{i(i-1)}).$ \\
Otherwise by division algorithm there are polynomials $q_{ii}(x), r_{ii}(x)$ such that $$q_{i(i-1)}=q_{ii}m_{i(i-1)} +r_{ii}$$ where either 
$r_{ii}=0$ or $0 \leq \deg r_{ii}<\deg m_{i(i-1)}.$ Then \\ $~~~~q_{i(i-1)}*(l_{i1}h_{i(i-2)},\ldots, l_{i(i-1)}h_{i(i-2)}, 2^{i-1}a_{i(i-1)}h_{i(i-2)}, 0,\ldots, 0 )\\ = q_{ii}*(l_{i1}h_{i(i-1)},l_{i2}h_{i(i-1)}, \ldots, l_{i(i-1)}h_{i(i-1)}, 0, \ldots,0) \\ ~~~~~~~~~~~~~~~~~~~~~~~~~~~~~+r_{ii}*(l_{i1}h_{i(i-2)}, 
\ldots, l_{i(i-1)}h_{i(i-2)}, 2^{i-1}a_{i(i-1)}h_{i(i-2)},0, \ldots,0).$ \\
Clearly $r_{ii}*(l_{i1}h_{i(i-2)}, 
\ldots, l_{i(i-1)}h_{i(i-2)}, 2^{i-1}a_{i(i-1)}h_{i(i-2)},0, \ldots,0) \in Span(\mathcal{S}_{i(i-1)})$. \\
So we need only to prove that $$q_{ii}*(l_{i1}h_{i(i-1)},l_{i2}h_{i(i-1)}, \ldots, l_{i(i-1)}h_{i(i-1)}, 0, \ldots,0) \in Span(\mathcal{S}).$$

Now we know from the previous Lemma~\eqref{Lm2} that $$l_{i j_i} h_{i(i-1)}= a_{j_i} f_{j_i i} + \sum\limits_{k= j_i +1}^{i-1} l_{k j_i} f_{ki};~\forall~1\leq j_i\leq i-1,~1\leq i\leq n.$$
So, $q_{ii}*(l_{i1}h_{i(i-1)},l_{i2}h_{i(i-1)}, \ldots, l_{i(i-1)}h_{i(i-1)}, 0, \ldots,0) \\ = q_{ii}*(a_{1} f_{1i} + \sum\limits_{k= 2}^{i-1} l_{k 1} f_{ki}, a_{2} f_{2i} + \sum\limits_{k= 3}^{i-1} l_{k 2} f_{ki},\ldots, a_{i-1}f_{(i-1)i},0,\ldots,0) \\ = q_{ii}f_{1i} * (a_1,0,\ldots,0) + q_{ii}f_{2i} * (l_{21},a_2,0,\ldots,0) + q_{ii}f_{3i} * (l_{31},l_{32},a_3,0,\ldots,0) + \cdots + q_{ii}f_{(i-1)i} * (l_{(i-1)1},\ldots,\l_{(i-2)(i-2)},a_{i-1},0,\ldots,0) \in Span(\displaystyle{\bigcup\limits_{k=0}^{i-1} \mathcal{S}_{k 0}}).$

Hence $$d_i * (l_{i1}, l_{i2}, \ldots, l_{i(i-1)},a_i,0,\ldots,0) \in Span( \mathcal{S}).$$
Thus $c(x) \in Span(\bigcup\limits_{i=1}^{n}(\bigcup\limits_{j_i=0}^{i-1} \mathcal{S}_{i j_i})) = Span (\mathcal{S})$. So, $\mathcal{S}$ is a spanning set for $\mathcal{C}.$ Moreover $\mathcal{S}$ is minimal in the sense that no element in $\mathcal{S}$ can be expressed as a linear combination of other elements in $\mathcal{S}.$ Thus $\mathcal{S}$ is a minimal spanning set for $\mathcal{C}.$

Now clearly for $i=\{1,2,\ldots,n\}$, $\mathcal{S}_{i0}$ will contribute $2^{i \deg h{i0}}$ codewords and $\mathcal{S}_{i j_i}$ will contribute $2^{(i-j_i) \deg m_{i j_i}} $ codewords for each $j_i \in \{ 1,2,\ldots,i-1\}.$ So, the total number of codewords in $\mathcal{C}$ is $\prod\limits_{i=1}^n(2^{i \deg h_{i0}} \prod\limits_{j_i=1}^{i-1} 2^{(i-j_i)\deg m_{i j_i}})= 2^{\sum\limits_{i=1}^n ( i \deg h_{i0} + \sum\limits_{j_i=1}^{i-1}(i-j_i)\deg m_{ij_i})}.$ 
\end{proof}
\begin{corollary}
Let $m\in \{1,2,\ldots,n \}$ and $$\mathcal{C}=\langle(l_{m1},l_{m2}, \ldots, l_{m(m-1)},a_m,0,\ldots,0) \rangle$$ where $a_m=\sum\limits_{j_m=0}^{m-1} 2^{j_m}a_{m j_m}$ satisfying 
$a_{mj_m}h_{m j_m}=x^{\alpha_m}-1; a_{m j_m}m_{m j_m}=a_{m(j_m-1)}; 0\leq j_m \leq m-1$. 
Let $$\mathcal{S}_{m0}=\bigcup\limits_{k=0}^{\deg(h_{m0})-1}\{x^k*(l_{m1},l_{m2},\ldots, l_{m(m-1)},a_m,0,\ldots,0) \},$$ $\mathcal{S}_{m j_m}=\bigcup\limits_{k=0}^{\deg(m_{mj_m})-1}\bigg\{x^k*(l_{m1}h_{m(j_m-1)},l_{m2}h_{m(j_m-1)},\ldots, l_{m(m-1)}h_{m(j_m-1)}, \\ ~~~~~~~~~~~~~~~~~~~~~~~~~~~~ ~~~~~~~~~~~~~~~~~~~~~~~~~~~~~~~ ~~~~~~
            \sum\limits_{p=j_m}^{m-1}a_{mp}h_{m(j_m-1)},0,\ldots,0) \bigg\},$ \\
$0 \leq j_m \leq m-1.$
Then $\mathcal{S}=\bigcup\limits_{j_m=0}^{m-1}\mathcal{S}_{m j_m}$ forms a minimal spanning set for the $\prod\limits_{i=1}^n \mathbb{Z}_{2^i}$-additive cyclic code $ \mathcal{C}$ and $\mathcal{C}$ has $2^{ m \deg h_{m0} + \sum\limits_{j_m=1}^{m-1}(m-j_m)\deg m_{m j_m}}$ codewords. Any codeword $c$ can be written as \\ $c=(e_{m0}*(l_{m1},l_{m2},\ldots, l_{m(m-1)},a_m,0,\ldots,0) \\~~~~~~ + \sum\limits_{j_m=1}^{m-1} e_{m j_m}*(l_{m1}h_{m(j_m-1)},l_{m2}h_{m(j_m-1)},\ldots, l_{m(m-1)}h_{m(j_m-1)}, \\ ~~~~~~~~~~~~~~~~~~~~~~~~~~~~~~~ ~~~~~~~~~~~~~~~~~~~~~~~~~~
            \sum\limits_{p=j_m}^{m-1}a_{mp}h_{m(j_m-1)},0,\ldots,0)),$ \\
             $e_{m j_m}$'s, $0 \leq j_m \leq m-1$, being suitable polynomials satisfying $\deg e_{m0} \leq \deg h_{m0}-1$ and $\deg e_{m j_m} \leq \deg m_{m j_m}-1 ~ \forall ~ 1 \leq j_m \leq m-1.$
\end{corollary}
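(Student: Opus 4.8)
The plan is to treat this Corollary as the single-generator instance of Theorem~\ref{th3}: here $\mathcal{C}=\langle g\rangle$ with the one generator $g=(l_{m1},\ldots,l_{m(m-1)},a_m,0,\ldots,0)$, so the spanning set collapses to the single index-$m$ layer $\bigcup_{j_m=0}^{m-1}\mathcal{S}_{mj_m}$ of the set constructed there. First I would take an arbitrary $c=d*g\in\mathcal{C}$ with $d\in\mathbb{Z}_{2^n}[x]$ and run exactly the same successive-division argument as in Theorem~\ref{th3}: divide $d$ by $h_{m0}$ to peel off a remainder contributing to $Span(\mathcal{S}_{m0})$, then divide the quotient by $m_{m1}$ to peel off a piece of $Span(\mathcal{S}_{m1})$, and continue through $m_{m(m-1)}$. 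The two facts that keep these steps aligned with the $\mathcal{S}_{mj_m}$ are the identity $m_{mj_m}h_{m(j_m-1)}=h_{mj_m}$ (established inside the proof of Theorem~\ref{th3}) and the vanishing $a_{mp}h_{m(j_m-1)}\equiv 0\pmod{x^{\alpha_m}-1}$ for $p<j_m$, which follows from the chain $a_{m(m-1)}\mid\cdots\mid a_{m0}\mid x^{\alpha_m}-1$ and is exactly what drops the lower-$p$ terms out of the $m$-th coordinate of each $\mathcal{S}_{mj_m}$-generator.

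The step I expect to be the main obstacle is the last one, where after removing all the remainders there remains a leftover of the form $q_{mm}*(l_{m1}h_{m(m-1)},\ldots,l_{m(m-1)}h_{m(m-1)},0,\ldots,0)$. In the full Theorem~\ref{th3} this was disposed of by Lemma~\ref{Lm2}, which rewrites each $l_{mj_m}h_{m(m-1)}$ through the \emph{lower} generators $(a_1,0,\ldots),\ldots,(l_{(m-1)1},\ldots,a_{m-1},0)$ and so lands it in $Span(\bigcup_{k=0}^{m-1}\mathcal{S}_{k0})$. In the present single-generator setting those lower generators are absent, so that rewriting cannot be invoked verbatim, and I would instead argue that this leftover already lies in $Span(\bigcup_{j_m}\mathcal{S}_{mj_m})$. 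The key observation is that $a_mh_{m(m-1)}\equiv 0$, so the leftover bracket equals $h_{m(m-1)}*g$; the delicate point is to show, using the divisibility chain together with the conditions of Theorem~\ref{Th2}, that $h_{m(m-1)}*g$ (and its $q_{mm}$-multiple) collapses back into the already-built $\mathcal{S}_{mj_m}$ and eventually to $0$ once the $m$-th coordinate is exhausted, rather than looping back onto itself. Verifying termination here, not the bookkeeping, is where the real work sits.

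Once spanning is in hand, minimality and the count are routine. For minimality I would note that the generator of $\mathcal{S}_{m0}$ has $m$-th coordinate $a_m$, whose lowest $2$-adic component $a_{m0}$ is nonzero modulo $2$ (being monic), giving it additive order $2^m$; likewise the generator of $\mathcal{S}_{mj_m}$ has lowest $2$-adic component $2^{j_m}a_{mj_m}h_{m(j_m-1)}$ and hence additive order $2^{m-j_m}$. Because the shift ranges are cut off precisely at $\deg h_{m0}$ and $\deg m_{mj_m}$, no nonzero low-degree combination can annihilate the $m$-th coordinate, so no element of $\mathcal{S}$ is a combination of the others. Multiplying the orders over the $\deg h_{m0}$ shifts in $\mathcal{S}_{m0}$ and the $\deg m_{mj_m}$ shifts in each $\mathcal{S}_{mj_m}$ then yields $2^{\,m\deg h_{m0}+\sum_{j_m=1}^{m-1}(m-j_m)\deg m_{mj_m}}$, and reading the remainders produced by the division algorithm off as the coefficients $e_{mj_m}$, with the degree bounds $\deg e_{m0}\le\deg h_{m0}-1$ and $\deg e_{mj_m}\le\deg m_{mj_m}-1$, gives the asserted explicit expansion of a general codeword.
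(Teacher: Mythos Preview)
Your plan differs from the paper's proof, which is a one-line specialization of Theorem~\ref{th3} rather than a rerun of its argument. The paper simply takes $a_{ij_i}=x^{\alpha_i}-1$ for every $i\neq m$; then $h_{i0}=1$ and $m_{ij_i}=1$ for those $i$, so each $\mathcal{S}_{ij_i}$ with $i\neq m$ is an empty union and the spanning set of Theorem~\ref{th3} collapses literally to $\bigcup_{j_m=0}^{m-1}\mathcal{S}_{mj_m}$. The codeword count and the expansion in the $e_{mj_m}$ are then read off verbatim from Theorem~\ref{th3}.

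That specialization also dissolves the obstacle you flag. You say Lemma~\ref{Lm2} ``cannot be invoked verbatim'' because the lower generators are absent; in the paper's reading they are not absent but \emph{zero}, since with $a_{ij_i}=x^{\alpha_i}-1$ one has $a_i\equiv 0$ in $\mathbb{Z}_{2^i}[x]/\langle x^{\alpha_i}-1\rangle$ and one may take $l_{ij}=0$ for $i\neq m$. Lemma~\ref{Lm2} then applies as stated and writes each $l_{mj_m}h_{m(m-1)}$ as a combination of these zero generators, so the leftover $q_{mm}*(l_{m1}h_{m(m-1)},\ldots,l_{m(m-1)}h_{m(m-1)},0,\ldots,0)$ is itself zero. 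Your alternative---feeding $q_{mm}h_{m(m-1)}*g$ back through the same chain of divisions---does not terminate on its own: since $h_{m(m-1)}=h_{m0}\,m_{m1}\cdots m_{m(m-1)}$, every remainder in the chain is $0$ and the final quotient returns to $q_{mm}$, so the process loops. The ``delicate point'' you anticipate is therefore real in your route, and its resolution is not an iterative collapse inside $\bigcup_{j_m}\mathcal{S}_{mj_m}$ but the outright vanishing of the leftover, which the paper's specialization gives for free.
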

\begin{proof}

Taking $a_{i j_i} = x^{\alpha_i}-1 $ in the previous theorem gives $\mathcal{S}_{i j_i} =\phi ~\forall~ 0\leq j_i \leq i-1, i=1,2,\ldots,m-1,m+1,\ldots,n $ and hence the corollary follows.

\end{proof}
\begin{example}
Let $\mathcal{C}$ be a $\mathbb{Z}_2 \mathbb{Z}_4 \mathbb{Z}_8$-additive cyclic code in 
$$\mathcal{R}_{8,5,5}=\mathbb{Z}_2[x]/\langle x^8-1\rangle \times \mathbb{Z}_4[x]/\langle x^5-1\rangle \times\mathbb{Z}_8[x]/\langle x^5-1\rangle  $$
generated by $\big((a_{10},0,0), (l_{21},a_{20}+2a_{21},0), (l_{31},l_{32}, a_{30}+2a_{31}+4a_{32}) \big),$
where 
\begin{itemize}
 \item[~] $a_{10}=1+x^2, l_{21}=l_{31}=1+x, $
  \item[~]    $a_{20}=2x^2+3, a_{21}=3, l_{32}=3x $
   \item[~]   $ a_{30}=2x+3, a_{31}=3, a_{32}=2x^2+3.$
\end{itemize}
Hence the generator matrix can be obtained from Theorem~\ref{th3} as follows:

\[
\left(\begin{array}{cccccccccccccccccccc}
1 &~ 0 &~ 1 &~ 0 &~ 0 &~ 0 &~ 0 &~0 &~0 &~0 &~0 &~0 &~0 &~0 &~0 &~0 &~0 &~0 &~ \\
0 &~1 &~ 0 &~ 1 &~ 0 &~ 0 &~ 0 &~ 0 &~0 &~0 &~0 &~0 &~0 &~0 &~0 &~0 &~0 &~0  &~\\
0 &~0 &~ 1 &~ 0 &~ 1 &~ 0 &~ 0 &~ 0 &~ 0 &~0 &~0 &~0 &~0 &~0 &~0 &~0 &~0 &~0  &~\\
0 &~0 &~ 0 &~ 1 &~ 0 &~ 1 &~ 0 &~ 0 &~ 0 &~ 0 &~0 &~0 &~0 &~0 &~0 &~0 &~0 &~0 &~\\
0 &~0 &~0 &~ 0 &~ 1 &~ 0 &~ 1 &~ 0 &~ 0 &~ 0 &~ 0 &~0 &~0 &~0 &~0 &~0 &~0 &~0 &~\\
0 &~0 &~0 &~ 0 &~ 1 &~ 0 &~ 1 &~ 0 &~ 0 &~ 0 &~ 0 &~0 &~0 &~0 &~0 &~0 &~0 &~0 &~\\
1 &~1 &~0 &~ 0 &~ 0 &~ 0 &~ 0 &~0  &~ 1 &~ 0 &~2  &~0 &~0 &~0 &~0 &~0 &~0 &~0 &~\\
0 &~ 1 &~1 &~0 &~ 0 &~ 0 &~ 0 &~ 0 &~0  &~ 1 &~ 0 &~2  &~0 &~0 &~0 &~0 &~0 &~0 &~\\
0 &~0 &~ 1 &~1 &~0 &~ 0 &~ 0 &~ 0 &~ 0 &~0  &~ 1 &~ 0 &~2  &~0 &~0 &~0 &~0 &~0 &~\\
1 &~1 &~0 &~ 0 &~ 0 &~ 0 &~ 0 &~0  &~ 0 &~ 3 &~0  &~0 &~0 &~5 &~2 &~0 &~0 &~0 &~\\
0 &~ 1 &~1 &~0 &~ 0 &~ 0 &~ 0 &~ 0 &~0  &~ 0 &~ 3 &~0  &~0 &~0 &~5 &~2 &~0 &~0 &~\\
0 &~0 &~ 1 &~1 &~0 &~ 0 &~ 0 &~ 0 &~ 0 &~0  &~ 0 &~ 3 &~0  &~0 &~0 &~5 &~2 &~0 &~\\
0 &~ 0 &~0 &~ 1 &~1 &~0 &~ 0 &~ 0 &~ 0 &~ 0 &~0  &~ 0 &~ 3 &~0  &~0 &~0 &~5 &~2 &~\\
1 &~1 &~0 &~ 0 &~ 0 &~ 1 &~ 1 &~0  &~ 0 &~ 0 &~0  &~0 &~0 &~0 &~0 &~0 &~0 &~0 &~\\
\end{array}\right)
.
\] 

\end{example}
\section{Conclusion}
In this paper we have used basic results and techniques from algebra \cite{Hungerford74,McDonald74} and coding theory \cite{Ling}. We have introduced $\prod\limits_{i=1}^n \mathbb{Z}_{2^i}$-additive cyclic codes. We have defined a $\prod\limits_{i=1}^n \mathbb{Z}_{2^i}$-additive cyclic code of a certain length and a notion of distance in it has been introduced using the generalised Gray maps. We have showed that the dual of such a code is also cyclic. Then we have identified these codes as $\mathbb{Z}_{2^n}[x]$-submodules of $\prod\limits_{i=1}^n \mathbb{Z}_{2^i}[x] / \langle x^{\alpha_i}-1 \rangle$; $\alpha_i$ and $i$ being relatively prime for each $i=1,2,\ldots,n.$ We gave the polynomial definition of a $\prod\limits_{i=1}^n \mathbb{Z}_{2^i}$-additive cyclic code. We have studied the structure of these codes and determined a set of generators for the same. We also have determined a minimal spanning set for such a code and the total number of codewords in it. Finally we present an illustrative example. 

Since this family of codes are new, many things are left to be explored. One can try to obtain self-dual codes of this type. Also perfect codes may be of great interest. This family contains all binary codes and also $\mathbb{Z}_{2^r} \mathbb{Z}_{2^s}$-additive cyclic codes for $r<s \in \mathbb{N}$. Moreover the idea will work same for obtaining $\prod\limits_{i=1}^n \mathbb{Z}_{p^i}$-additive cyclic codes; $p$ being a prime. So this family is one of the most generalised family of codes.

\end{document}